\documentclass[11pt,letterpaper]{article}
\usepackage[margin=1in]{geometry}
\usepackage[utf8x]{inputenc}
\usepackage{amsmath, amsthm}
\usepackage{wrapfig,graphicx,amssymb,textcomp,array,amsmath}
\usepackage{algpseudocode} 
\usepackage{enumerate}
\usepackage{enumitem}
\usepackage{multirow}
\usepackage{tabularx}
\algtext*{EndWhile}
\algtext*{EndIf}
\usepackage{algorithm}
\usepackage{color}
\usepackage[titletoc,title]{appendix}

\setlength{\arraycolsep}{0in}

\newcommand{\linepq}[2]{\overline{#1#2}}

\newcommand{\cpd}{\mu}

\newcommand{\changed}[1]{{\color{black} #1}}

\title{Flip Distance to some Plane Configurations}

\author{Ahmad Biniaz\thanks{Cheriton School of Computer Science, University of Waterloo. Supported by NSERC and Fields Postdoctoral Fellowships. ahmad.biniaz@gmail.com}
	\and Anil Maheshwari\thanks{School of Computer Science, Carleton University. Supported by NSERC. \{anil, michiel\}@scs.carleton.ca} 
	\and  Michiel Smid\footnotemark[2]}

\newtheorem{lemma}{Lemma}

\newtheorem{theorem}{Theorem}

\newtheorem*{problem*}{Problem}

\begin{document}
\maketitle
\begin{abstract}
We study an old geometric optimization problem in the plane. Given a perfect matching $M$ on a set of $n$ points in the plane, we can transform it to a non-crossing perfect matching by a finite sequence of flip operations. The flip operation removes two crossing edges from $M$ and adds two non-crossing edges. Let $f(M)$ and $F(M)$ denote the minimum and maximum lengths of a flip sequence on $M$, respectively. It has been proved by Bonnet and Miltzow (2016) that $f(M)=O(n^2)$ and by van Leeuwen and Schoone (1980) that $F(M)=O(n^3)$. 
We prove that $f(M)=O(n\Delta)$ where $\Delta$ is the spread of the point set, which is defined as the ratio between the longest and the shortest pairwise distances. This improves the previous bound \changed{if the point set has sublinear spread.}
For a matching $M$ on $n$ points in convex position we prove that $f(M)=n/2-1$ and $F(M)={{n/2} \choose 2}$; these bounds are tight.

Any bound on $F(\cdot)$ carries over to the bichromatic setting, while this is not necessarily true for $f(\cdot)$. Let $M'$ be a bichromatic matching. The best known upper bound for $f(M')$ is the same as for $F(M')$, which is essentially $O(n^3)$. We prove that $f(M')\leqslant n-2$ for points in convex position, and $f(M')= O(n^2)$ for semi-collinear points.

The flip operation can also be defined on spanning trees. 
For a spanning tree $T$ on a convex point set we show that $f(T)=O(n\log n)$.
\end{abstract}
\section{Introduction}
\label{introduction-section}
A {\em geometric graph} is a graph whose vertices are points in the plane, and whose edges are straight-line segments connecting the points. All graphs that we consider in this paper are geometric. A graph is {\em plane} if no pair of its edges cross each other. Let $n\geqslant 2$ be an even integer, and let $P$ be a set of $n$ points in the plane that is in general position (no three points on a line). For two points $a$ and $b$ in the plane, we denote by $ab$ the segment with endpoints $a$ and $b$. Let $M$ be a perfect matching on $P$. If two edges in $M$ cross each other, we can remove this crossing by a flip operation. The {\em flip operation} (or flip for short) removes two crossing edges and adds two non-crossing edges to obtain a new perfect matching. In other words, if two segments $ab$ and $cd$ cross, then a flip removes $ab$ and $cd$ from the matching, and adds either $ac$ and $bd$, or $ad$ and $bc$ to the matching; see Figure~\ref{flip-fig}(a). Every flip decreases the total length of the edges of $M$, and thus, after a finite sequence of flips, $M$ can be transformed to a plane perfect matching. This process of transforming a crossing matching to a plane matching is referred to as {\em uncrossing} or {\em untangling} a matching. Motivated by this old folklore result, we investigate the minimum and the maximum lengths of a sequence of flips to reach a plane matching.

To uncross a perfect matching $M$, we say that the sequence $(M{=}M_0,M_1,\dots, M_k)$ is a valid flip sequence if $M_{i+1}$ is obtained from $M_i$ by a single flip, and $M_k$ is plane. The number $k$ denotes the length of this flip sequence. We define $f(M)$ to be the minimum length of any valid flip sequence to uncross $M$, that is, the minimum number of flips required to transform $M$ to a plane perfect matching. We define $F(M)$ to be the maximum length of any valid flip sequence. As for $F(M)$, one can imagine that an adversary imposes which of the two flips to apply on which of the crossings.

In the bichromatic setting, we are given $n/2$ red and $n/2$ blue points and a bichromatic matching, that is a perfect matching in which the two endpoints of every segment have distinct colors. Contrary to the monochromatic setting, there is only one way to flip two crossing bichromatic edges; see Figure~\ref{flip-fig}(b). In the bichromatic setting the adversary can only impose the crossing to flip. Thus, any upper bound on $F(M)$ for monochromatic matchings carries over to bichromatic matchings; this statement is not necessarily true for $f(M)$.

\begin{figure}[htb]
	\centering
	\setlength{\tabcolsep}{0in}
	$\begin{tabular}{cc}
	\multicolumn{1}{m{.6\columnwidth}}{\centering\includegraphics[width=.45\columnwidth]{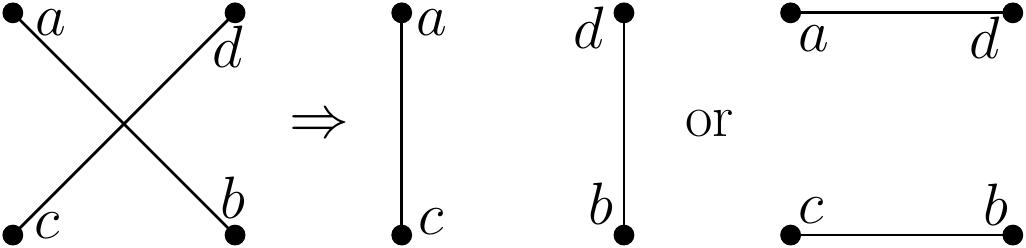}}
	&\multicolumn{1}{m{.4\columnwidth}}{\centering\includegraphics[width=.27\columnwidth]{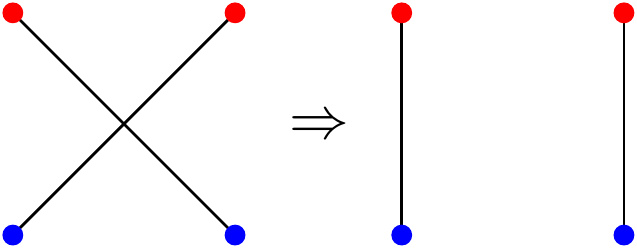}}
	\\
	(a)&(b)
	\\
	\multicolumn{1}{m{.6\columnwidth}}{\centering\includegraphics[width=.45\columnwidth]{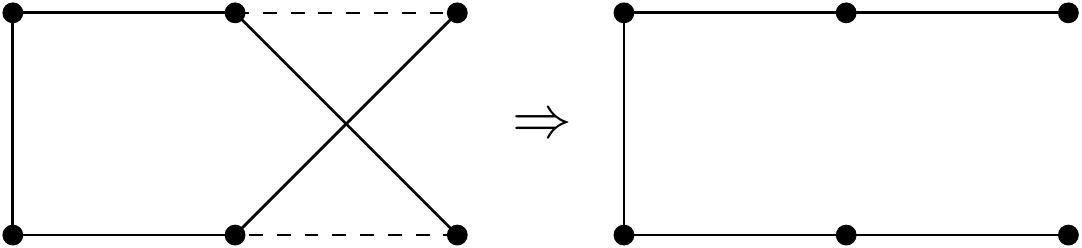}}
	&\multicolumn{1}{m{.4\columnwidth}}{\centering\includegraphics[width=.25\columnwidth]{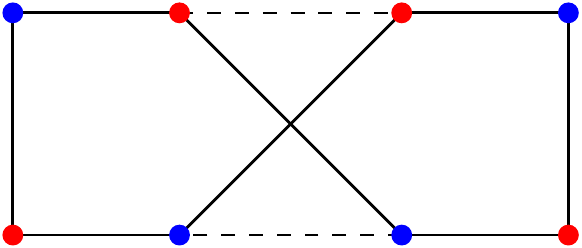}}
	\\
	(c) & (d) 
	\end{tabular}$
	\caption{(a) Two ways to flip a crossing in a monochromatic matching. (b) The only way to flip a crossing in a bichromatic matching. (c) One way to flip a crossing in a monochromatic tree. (d) No way to flip a crossing in a bichromatic Hamiltonian cycle.}
	\label{flip-fig}
\end{figure}

The flip operation can be defined for a spanning tree (resp.~a Hamiltonian cycle) analogously, that is, we remove a pair of crossing edges and add two other edges so that the graph remains a spanning tree (resp.~a Hamiltonian cycle) after this operation. We define $f(\cdot)$ and $F(\cdot)$ for spanning trees and Hamiltonian cycles, analogously. As shown in Figure~\ref{flip-fig}(c), there is only one way to flip a crossing in a spanning tree (resp.~a Hamiltonian cycle). Contrary to the bichromatic matching, it is not always possible to flip a crossing in a bichromatic spanning tree nor in a bichromatic Hamiltonian cycle; see Figure~\ref{flip-fig}(d).

\subsection{Related Work}
The most relevant works are by van Leeuwen and Schoone \cite{Leeuwen1981}, and Oda and Watanabe \cite{Oda2007} for Hamiltonian cycles, and by Bonnet and Miltzow \cite{Bonnet2016} for matchings. They proved, with elegant arguments, the following results.

\begin{theorem}[van Leeuwen and Schoone, 1981 \cite{Leeuwen1981}]
	\label{Leeuwen-thr}
	For every Hamiltonian cycle $H$ on $n$ points in the plane we have that $F(H)=O(n^3)$.
\end{theorem}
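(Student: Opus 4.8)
The plan is to avoid arguing from the (real-valued) total edge length, which is strictly decreasing but carries no polynomial bound, and instead to build a purely combinatorial potential that drops with every flip and is bounded by $O(n^3)$. First I would pin down the structure of a single cycle flip. Suppose the cycle visits $a,b$ consecutively and $c,d$ consecutively, and $ab$ crosses $cd$. Deleting these two edges splits the cycle into two paths with endpoint sets $\{b,c\}$ and $\{d,a\}$; the only way to reconnect them into one Hamiltonian cycle other than re-inserting $ab,cd$ adds the edges $ac$ and $bd$. Since $ab$ and $cd$ are the two crossing diagonals of the convex quadrilateral on $\{a,b,c,d\}$ (whose convex cyclic order is $a,c,b,d$), the inserted edges $ac$ and $bd$ form a pair of \emph{opposite} sides of that quadrilateral, and are therefore disjoint and non-crossing.

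For the potential, fix a line $\ell$ in general position (through no point of $P$) and let $c_H(\ell)$ be the number of edges of the current cycle that cross $\ell$. The key lemma I would establish is that a flip never increases $c_H(\ell)$ for any $\ell$, and strictly decreases it for at least one line. Only the four edges $ab,cd,ac,bd$ change, so the change in $c_H(\ell)$ is determined entirely by which side of $\ell$ each of $a,b,c,d$ lies on. Checking the possible bipartitions of the convex four-tuple $a,c,b,d$ by a line, the net change is $0$ in every case (the all-on-one-side case, every singleton split, and the $2$–$2$ split $\{b,c\}\mid\{a,d\}$) except the $2$–$2$ split $\{a,c\}\mid\{b,d\}$: there the two removed diagonals each cross $\ell$ while neither inserted side does, giving a change of $-2$. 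Because $ac$ and $bd$ are disjoint segments they admit a separating line, so a line realizing the $\{a,c\}\mid\{b,d\}$ split always exists, which forces the strict decrease.

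It then remains to discretize and sum. The number of distinct bipartitions of $P$ induced by a line is $O(n^2)$: dualizing the points to $n$ lines yields an arrangement with $O(n^2)$ faces, and each face corresponds to one combinatorial type of line (a fixed above/below relation with every point). I would define the global potential $\Phi=\sum_\ell c_H(\ell)$, summed over these $O(n^2)$ classes. Each term is at most $n$, so $0\le\Phi\le O(n^3)$ at all times. By the lemma every flip leaves each term non-increased and decreases at least one term by $2$, hence $\Phi$ drops by at least $2$ per flip. Consequently the number of flips is at most $\Phi_{\mathrm{initial}}/2=O(n^3)$, giving the claimed bound.

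The main obstacle is the monotonicity lemma of the second paragraph: one must verify both that the crossing count with \emph{every} line is non-increasing under a flip and that a strictly decreasing line class is guaranteed to exist, and this hinges on correctly identifying that the flip replaces the crossing diagonals by opposite sides of the convex quadrilateral. Once that case analysis is in hand, the duality bound of $O(n^2)$ separation types and the telescoping of $\Phi$ are routine bookkeeping.
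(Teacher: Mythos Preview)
Your proof is correct and is essentially the van Leeuwen--Schoone argument as the paper describes it: define a line-crossing potential bounded by $O(n^3)$ and show that every flip (which replaces the two diagonals of a convex quadrilateral by a pair of opposite sides) is non-increasing on every line and strictly drops the count by $2$ on some line. Your discretization via the $O(n^2)$ combinatorial line types (faces of the dual arrangement of $P$) is a clean repackaging of their use of the $\binom{n}{2}$ lines through point pairs, but the core monotonicity lemma and the telescoping bound are the same.
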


\begin{theorem}[Oda and Watanabe, 2007 \cite{Oda2007}]
	\label{Oda-thr}
	For every Hamiltonian cycle $H$ on $n$ points in the plane in convex position we have that $f(H)\leqslant 2n-7$.
\end{theorem}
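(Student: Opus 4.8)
The plan is to argue by induction on $n$, peeling off one convex-hull vertex per round while spending at most two flips, so that the total is at most $2(n-4)+1 = 2n-7$. Two structural facts drive the argument. First, on points in convex position the only plane Hamiltonian cycle is the convex polygon $p_1 p_2 \cdots p_n$ (in the cyclic hull order), so reaching a plane cycle is the same as reaching this specific target. Second, an edge $p_i p_{i+1}$ joining two hull-consecutive points is never crossed by any other chord, since all remaining points lie on one side of the line through $p_i$ and $p_{i+1}$; hence such an edge, once present, is never touched by a later flip. I will call a vertex $v$ an \emph{ear} if both of its cycle-neighbors are its two hull-neighbors.

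The reduction step I would prove is: if $H$ is not already the convex polygon, then at most two flips turn $H$ into a cycle that has an ear $v$. Given such an ear, the two edges at $v$ are hull edges, so $v$ lies on a path $p_{i-1}\, v\, p_{i+1}$ that no future flip disturbs; deleting $v$ and joining $p_{i-1}$ to $p_{i+1}$ yields a Hamiltonian cycle on the $n-1$ remaining points, which are still in convex position and for which $p_{i-1} p_{i+1}$ is a new hull edge. Crucially, any flip on the smaller instance lifts verbatim to the original one: the two crossing chords it uses are chords of the full point set as well, they still cross, and because they avoid the protected hull edges at $v$ the reconnection keeps $v$'s ear intact. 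Thus flip counts add, giving the recursion $\mathrm{flips}(n) \le 2 + \mathrm{flips}(n-1)$. I would take as base case $n=4$, where a crossing convex quadrilateral is fixed by exactly one flip, so $\mathrm{flips}(4) \le 1$; unwinding the recursion down to four points yields $2(n-4) + 1 = 2n-7$.

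The heart of the proof, and the step I expect to be hardest, is the reduction lemma itself: that an ear can always be manufactured in at most two flips. I would \emph{not} fix the ear-vertex in advance; instead I would choose it adaptively from the current crossing structure. Concretely, I would locate a crossing incident to a hull vertex and perform one flip that installs a hull edge at that vertex, then argue that a second well-chosen flip completes an ear at one of the newly favorable vertices. This requires a short case analysis on where the cycle-neighbors of the candidate vertices sit in the hull order, using that the reconnection forced by uncrossing two chords (a 2-opt move) is essentially unique. The pentagram on five points already exhibits the tight behavior: two flips create an ear and a third finishes the quadrilateral, matching $2\cdot 5 - 7 = 3$, and it is a good sanity check that the adaptively chosen ear, rather than a preselected vertex, is what makes two flips suffice. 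The remaining care is bookkeeping: verifying that the lifting correspondence never forces an extra flip and that the count is exactly $2n-7$ rather than $2n-6$, which is why reducing only down to the one-flip base at $n=4$, instead of all the way to a triangle, is the right stopping point.
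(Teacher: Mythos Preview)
The paper does not contain a proof of this theorem; it is stated in the Related Work section and attributed to Oda and Watanabe \cite{Oda2007}, so there is no in-paper argument to compare against.

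Evaluating your proposal on its own merits: the overall architecture is sound. The observation that the only plane Hamiltonian cycle on a convex point set is the boundary polygon is correct, as is the fact that a boundary edge is never crossed and hence never disturbed by later flips. Your lifting argument for the ear deletion is also fine: once $v$ is an ear, the edge $p_{i-1}p_{i+1}$ is a boundary edge of the reduced convex set, so no flip in the smaller instance ever touches it, and every such flip transfers verbatim to the larger cycle. The arithmetic $2(n-4)+1=2n-7$ with base case $n=4$ is correct.

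The genuine gap is exactly where you flag it: the reduction lemma that an ear can always be created in at most two flips. You describe the intended mechanism (``locate a crossing incident to a hull vertex, install one hull edge, then argue a second flip completes an ear'') but do not carry out the case analysis, and you say so yourself. This lemma is the entire content of the theorem; everything else is routine. Without it the proof is a plan, not a proof. In particular, after one flip that installs a single hull edge at some vertex $p_i$, the other cycle-neighbour of $p_i$ can still be far from $p_{i-1}$ or $p_{i+1}$, and it is not obvious that a single further flip can always install the second hull edge at $p_i$ (or at some other vertex). The pentagram check you give is encouraging but not a substitute for the general argument. To complete the proof you must actually exhibit, for an arbitrary non-plane Hamiltonian cycle on convex points, two specific crossings whose flips (in order) produce an ear, and verify in each case that the intermediate and final graphs are Hamiltonian cycles.
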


As for a lower bound, they presented a Hamiltonian cycle $H$ on $n\geqslant 7$ points in the plane in convex position for which $f(H)\geqslant n-2$. 

\begin{theorem}[Bonnet and Miltzow, 2016 \cite{Bonnet2016}]
	\label{Bonnet-thr}
	For every perfect matching $M$ on a set of $n$ points in the plane in general position we have that $f(M)=O(n^2)$. 
\end{theorem}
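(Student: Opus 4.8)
The statement only asks for an upper bound on the \emph{minimum} flip length $f(M)$, so I am free to prescribe a specific flip strategy; the adversary plays no role. The folklore fact that each flip strictly decreases the total edge length already guarantees that any maximal sequence of flips terminates in a plane matching, so the whole task is to \emph{design} a sequence and count its length. My target shape is the bound itself: I would try to carry out $O(n)$ ``phases,'' spending $O(n)$ flips in each phase to permanently settle one matched pair, for a total of $(n/2)\cdot O(n)=O(n^2)$.

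The engine of a phase is an angular monotonicity argument anchored at an extreme point. Let $p$ be the lexicographically smallest (lowest, then leftmost) point; it is a vertex of the convex hull, and every other point lies in the open upper half-plane bounded by the horizontal line through $p$, so the polar angles of the other points around $p$ lie in $(0,\pi)$ and form a genuine \emph{linear} order. Suppose $p$'s current matching edge $pq$ crosses another edge $cd$. Because the segments cross, $c$ and $d$ lie on opposite sides of the line through $pq$, so exactly one of them, say $c$, has polar angle around $p$ strictly smaller than that of $q$. Since $p,q,c,d$ are in convex position with $pq$ and $cd$ as the two crossing diagonals, both $pc$ and $pd$ are sides of their convex hull, so both re-pairings $\{pc,qd\}$ and $\{pd,qc\}$ are non-crossing and legal. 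I always choose the one that matches $p$ to the smaller-angle point $c$. Then the polar angle (equivalently, the angular rank in $\{1,\dots,n-1\}$) of $p$'s partner strictly decreases with every flip of $p$'s edge. Consequently $p$'s edge can be flipped at most $n-1$ times before it crosses nothing; one phase therefore settles $p$ into a crossing-free edge $pq$ using at most $n-1$ flips, all incident to $p$.

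To turn this into the global bound I would retire $p$ together with its settled edge $pq$, restrict attention to the remaining $n-2$ points, and recurse. The accounting is then immediate: each phase contributes $O(n)$ flips, there are $n/2$ phases, and no point is ever touched after its phase, so summing gives $\sum_{i}(n-i)=\binom{n}{2}=O(n^2)$. The one thing this accounting needs is that a settled edge $pq$ \emph{stays} crossing-free while the later phases operate on the other points, so that we never have to disturb $p$ again.

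This last point is exactly where I expect the real difficulty to lie, and it is the step I would spend the most care on. A flip replaces the two diagonals of a convex quadrilateral by two of its sides, and those sides lie inside the quadrilateral; using the elementary fact that a chord of a convex quadrilateral that crosses neither diagonal must have an endpoint inside it, one sees that a previously crossing-free edge can only \emph{become} crossed if one of its endpoints is enclosed in the convex hull of four currently active points. Since $p$ is a hull vertex it is never enclosed, but its partner $q$ may well be, and one can build configurations where $q$ sits inside a quadrilateral whose diagonals are matching edges; flipping those diagonals later produces a side that crosses $pq$. So the naive ``settle and retire'' recursion is not yet valid, and the crux of the proof is to remove this interference --- for instance by choosing the phase so that \emph{both} endpoints of every settled edge are protected from enclosure (e.g.\ forcing $pq$ to be a hull edge), or by replacing the phase-by-phase retirement with a single global potential, such as $\Phi(M)=\sum_{uv\in M}\operatorname{rank}_{u}(v)$ summing, over all edges, the angular rank of one endpoint around the lexicographically smaller one. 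The angular engine shows a well-chosen flip strictly lowers the term of the controlled edge; the obstacle then reappears as the need to control the second, uncontrolled new edge, and overcoming it is, I expect, the heart of the argument.
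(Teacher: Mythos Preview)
You have already put your finger on the gap, and it is fatal to the phase-by-phase plan as written: a settled edge $pq$ can indeed be re-crossed by later flips among the remaining points---your own scenario with $q$ enclosed in a quadrilateral of active points is exactly what happens, and nothing in the construction rules it out. Neither of your proposed repairs closes the hole. Forcing $pq$ to be a hull edge would protect it, but your angular descent does not drive $p$'s partner to a hull neighbour of $p$; it halts as soon as $p$'s edge is crossing-free, which can occur with $q$ deep in the interior. Your global potential $\Phi(M)=\sum_{uv}\operatorname{rank}_u(v)$ is the right instinct, but, as you yourself note, the second new edge of each flip is uncontrolled and can raise its term by more than the controlled term drops; for angular ranks there is no reason these should balance.

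The paper does not prove this theorem itself---it cites Bonnet and Miltzow---but it describes their method, and it is precisely a global potential of the kind you were reaching for, only with a choice of ``rank'' that tames \emph{both} new edges at once. Sort the points by $x$-coordinate and place one vertical line between every consecutive pair; let $\alpha(M)$ be the total number of edge--line crossings, so $\alpha(M)\le (n/2)(n-1)=O(n^2)$. For any crossing pair, flip so that the two \emph{leftmost} of the four endpoints form one new edge and the two rightmost form the other. If the four $x$-ranks are $r_1<r_2<r_3<r_4$, the crossing pair must be $\{r_1r_3,r_2r_4\}$ or $\{r_1r_4,r_2r_3\}$ (disjoint $x$-projections cannot cross), contributing $(r_3-r_1)+(r_4-r_2)$ or $(r_4-r_1)+(r_3-r_2)$; the new pair $\{r_1r_2,r_3r_4\}$ contributes $(r_2-r_1)+(r_4-r_3)$, so the drop is $2(r_3-r_2)\ge 2$ in either case. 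Hence the sequence has length at most $\alpha(M)/2=O(n^2)$. The point your angular potential lacked is that this bookkeeping charges both new edges simultaneously against both old ones, so the ``uncontrolled second edge'' never appears. (The paper also sketches an alternative dual-space argument via face counts in the line arrangement.)
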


The $O(n^3)$ upper bound of Theorem~\ref{Leeuwen-thr} carries over to perfect matchings. As for lower bounds, Bonnet and Miltzow \cite{Bonnet2016} presented two matchings $M_1$ and $M_2$ such that $f(M_1)=\Omega(n)$ and $F(M_2)=\Omega(n^2)$. 
The bound $F(M)=O(n^3)$ holds even if $M$ is a bichromatic matching, while the proof of $f(M)=O(n^2)$ does not generalize for the bichromatic setting. 

An alternate definition of an {\em edge flip} in a graph is the operation of removing one edge and inserting a different edge such that the resulting graph remains in the same graph class. \changed{The problem of minimizing the number of edge flips, to transform a graph into another, has been studied for many different graph classes.} See the survey by Bose and
Hurtado \cite{Bose2009} on edge flips in planar graphs both in the combinatorial and the geometric settings, and see \cite{Aichholzer2015b, Bose2011, Hurtado1999, Lawson1972} for edge flips in triangulations.

A related problem is the compatible matching problem in which we are given two perfect matchings on the same point set and the goal is to transform one to another by a sequence of compatible matchings (two perfect matchings, on the same point set, are said to be compatible if they are edge disjoint and their union is non-crossing). See \cite{Aichholzer2015, Aichholzer2009, Aloupis2015, Ishaque2013} for recent work on compatible matchings, and \cite{Olaverri2014} for its extension to compatible trees.

\subsection{Our Contribution}
In this paper we decrease the gap between lower and upper bounds for $f(\cdot)$ and $F(\cdot)$ for some input configurations. 
In Section~\ref{dense} we show that for every perfect matching $M$, on a set $P$ of $n$ points in the plane, we have $f(M)=O(n\Delta)$ where $\Delta$ is the spread of $P$.

Assume that $P$ is in convex position. In Section~\ref{convex-section} we show that for every perfect matching $M$ on $P$ we have that $f(M)\leqslant n/2-1$ and $F(M)\leqslant {n/2 \choose 2}$. These bounds are tight as Bonnet and Miltzow \cite{Bonnet2016} showed the existence of two perfect matchings $M_1$ and $M_2$ on $n$ points in convex position such that $f(M_1)\geqslant n/2-1$ and $F(M_2)\geqslant {n/2\choose 2}$. We also prove that for every spanning tree $T$ on $P$ we have that $f(T)=O(n\log n)$.

In Section~\ref{bichromatic-section} we study bichromatic matchings on special point sets. Assume that the points of $P$ are colored red and blue. We prove that, if $P$ is in convex position, then for every perfect bichromatic matching $M$ on $P$ we have that $f(M)\leqslant n-2$. Also, we prove that, if $P$ is semi-collinear, i.e., the blue points are on a straight line, then for every perfect bichromatic matching $M$ on $P$ we have that $f(M)=O(n^2)$. Table~\ref{table1} summarizes the results.

\begin{table}
	\centering
	\caption{Upper bounds on the minimum and maximum number of flips ($\Delta$ is the spread).}
	\label{table1}
	\begin{minipage}{14cm}\centering
		\begin{tabular}{|l|cc|cc|}
			\cline{1-5}
			{\bf minimum \# of flips}& \multicolumn{2}{c|}{$f(\cdot)$-general position}&\multicolumn{2}{c|}{$f(\cdot)$-convex position} \\ 
			
			\cline{1-5}
			\multirow{2}{*}{matchings}& $O(n^2)$ & \cite{Bonnet2016}&\multirow{2}{*}{$n/2-1$} &\multirow{2}{*}{Theorem~\ref{matching-convex-thr}}\\
			& $O(n\Delta)$ & Theorem~\ref{dense-thr}&&\\

			\cline{1-5}
			bichromatic matchings& $O(n^3)$&\cite{Leeuwen1981}&$n-2$&Theorem~\ref{bimatching-convex-thr}\\ 
			
			\cline{1-5}
			trees&$O(n^3)$&\cite{Leeuwen1981}&$O(n\log n)$&Theorem~\ref{tree-thr}\\ 
			
			\cline{1-5}
			Hamiltonian cycles&$O(n^3)$&\cite{Leeuwen1981}&$2n-7$&\cite{Oda2007}\\
			
			\cline{1-5}
			\multicolumn{5}{c}{\vspace{-10pt}}\\ 
			
			\cline{1-5}
			\multicolumn{4}{|l}{bichromatic matching on semi-collinear points $f(\cdot) = O(n^2)$}  &Theorem~\ref{semi-collinear-thr}\\ 
			
			\cline{1-5}
			\multicolumn{5}{c}{\vspace{-2pt}}\\
			
			\cline{1-5}
			{\bf maximum \# of flips}& \multicolumn{2}{c|}{$F(\cdot)$-general position}&\multicolumn{2}{c|}{$F(\cdot)$-convex position} \\ 
			
			\cline{1-5}
			matchings/trees/cycles& $O(n^3)$&\cite{Leeuwen1981}&${n/2}\choose 2$&Theorem~\ref{matching-convex-thr}\\ 
			\cline{1-5}	
			
		\end{tabular}
	\end{minipage}
	
\end{table}

\subsection{Preliminaries}
\label{preliminaries-section}
Let $a$ and $b$ be two points in the plane. We denote by $ab$ the straight line-segment between $a$ and $b$, and by $\linepq{a}{b}$ the line through $a$ and $b$. Let $P$ be a set of points in the plane in convex position. For two points $p$ and $q$ in $P$ we define the {\em depth} of the segment $pq$ as the minimum number of points of $P\setminus\{p,q\}$ on either side of $\linepq{p}{q}$. A {\em boundary edge} is a segment of depth zero, i.e., an edge of the convex hull of $P$. An edge $e$ in a graph $G$ is said to be {\em free} if $e$ is not crossed by other edges of $G$.

\section{Minimum Number of Flips}
\label{dense}
The {\em spread} $\Delta$ of a set of points (also called the {\em distance ratio} \cite{Clarkson1999}) is the ratio between the largest and the smallest interpoint distances.
It is well known that the spread of a set of $n$ points in the plane is $\Omega(\sqrt{n})$ (see e.g., \cite{Valtr1994}).
In this section, we prove an upper bound on the minimum length of a flip sequence in terms of $n$ and $\Delta$. In fact we prove the following theorem.

\begin{theorem}
	\label{dense-thr}
	For every perfect matching $M$ on a set of $n$ points in the plane in general position we have that $f(M)=O(n\Delta)$, where $\Delta$ is the spread of the point set. 
\end{theorem}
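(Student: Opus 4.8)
The plan is to exhibit a valid flip sequence of length $O(n\Delta)$, which by definition upper-bounds $f(M)$. After rescaling I may assume that the smallest pairwise distance is $1$, so every pairwise distance lies in $[1,\Delta]$; the total edge length $L(M)=\sum_{e\in M}|e|$ then satisfies $L(M)\le \frac n2\,\Delta$ initially and $L(M)\ge \frac n2$ always, since each of the $n/2$ edges has length at least $1$. Every flip strictly decreases $L$, so it suffices to show that whenever $M$ has a crossing one can perform a flip that decreases $L$ by at least a universal constant $c_0>0$. Then the greedy process that repeatedly applies such a flip must halt after at most $L(M)/c_0=O(n\Delta)$ steps, and when it halts no crossing remains, so the resulting matching is plane.

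To set up the one good flip, suppose edges $ab$ and $cd$ cross at a point $p$. By the law of cosines the length decrease of a flip is the sum of the two triangle-inequality slacks at $p$; for the reconnection $\{ac,bd\}$ it equals $\big(|ap|+|pc|-|ac|\big)+\big(|bp|+|pd|-|bd|\big)$, where the two triangles have included angles $\angle apc=\angle bpd$ (vertical angles), and for $\{ad,bc\}$ the included angles are the supplementary pair $\angle apd=\angle bpc$. One of these two angles is acute; I would choose the reconnection that subtends the acute angle at both corners, and after relabelling take it to be $\{ac,bd\}$. Since the included angle $\alpha$ satisfies $\cos\alpha\ge 0$, each corner's slack is bounded below by $x+y-\sqrt{x^2+y^2}$, where $x,y$ are the lengths of the two sub-segments meeting at that corner.

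The key step, and the main obstacle, is to convert this into a constant lower bound, and here the min-distance hypothesis on all four endpoints does the work through a short pigeonhole argument: at least one of the two corners has both of its sub-segments of length at least $\tfrac12$. Indeed, the two corners use the sub-segment pairs $(|ap|,|pc|)$ and $(|pb|,|pd|)$; if each corner had a sub-segment shorter than $\tfrac12$, then one short piece from each corner would join two of $a,b,c,d$ through $p$ by a path of total length below $1$ — for instance $|ap|<\tfrac12$ and $|pd|<\tfrac12$ give $|ad|<1$, while $|pc|<\tfrac12$ and $|pd|<\tfrac12$ give $|cd|<1$ — and checking all four such combinations contradicts the fact that every pairwise distance is at least $1$. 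Hence some corner contributes a slack of at least $\tfrac12+\tfrac12-\sqrt{(\tfrac12)^2+(\tfrac12)^2}=1-\tfrac{\sqrt2}{2}$, and I may take $c_0=1-\tfrac{\sqrt2}{2}$.

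Finally I would assemble the pieces. Starting from $M$, while a crossing remains, apply the flip guaranteed above; each such flip reduces $L$ by at least $c_0$, and since $L$ never drops below $\frac n2$ this can occur at most $\frac{L(M)}{c_0}\le \frac{n\Delta}{2c_0}=O(n\Delta)$ times. The process therefore halts, at which point no two edges cross, so the final matching is plane and the executed sequence is valid, giving $f(M)=O(n\Delta)$. I expect the slack estimate to be the routine part; the step requiring care is precisely the acute-angle selection combined with the four-way pigeonhole, which is what makes the per-flip decrease a constant independent of the configuration rather than something that can degenerate when a crossing occurs near an endpoint.
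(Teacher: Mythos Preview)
Your proof is correct and follows the same overall strategy as the paper: pick the acute-angle flip, show it decreases the total edge length by at least a fixed constant multiple of the minimum pairwise distance, and divide the initial length $O(n\mu\Delta)$ by that constant. Where you differ is in the proof of the per-flip bound. The paper splits into cases (angle at most $\pi/3$ handled directly via ``largest side faces largest angle''; angle in $(\pi/3,\pi/2]$ handled by rotating one segment to perpendicular, controlling how much the pairwise distances among the four endpoints can shrink, and then invoking a separate perpendicular-case lemma). Your route is shorter and more uniform: from $\cos\alpha\geqslant 0$ you get $|ac|\leqslant\sqrt{|ap|^2+|pc|^2}$ at each corner, and the four-way pigeonhole on sub-segment lengths shows that some corner has both pieces of length at least $\tfrac12$; monotonicity of $(x,y)\mapsto x+y-\sqrt{x^2+y^2}$ in each variable (which you use implicitly and should state explicitly) then yields the constant $1-\tfrac{\sqrt2}{2}$, in fact twice the paper's $\tfrac{2-\sqrt2}{4}$. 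Your pigeonhole step is a cleaner way to dispose of the degenerate ``crossing near an endpoint'' configuration that the paper handles via the rotation and two auxiliary lemmas; the paper's approach, on the other hand, makes the geometry of the worst case more visible.
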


For point sets with spread $o(n)$, the upper bound of Theorem~\ref{dense-thr} is better than the $O(n^2)$ upper bound of Theorem~\ref{Bonnet-thr}. For example, for {\em dense} point sets, which have spread $O(\sqrt{n})$, Theorem~\ref{dense-thr} gives an upper bound of $O(n\sqrt{n})$ on the number of flips. According to \cite{Edelsbrunner1997}, dense point sets commonly appear in nature, and they have applications in computer graphics. Valtr and others~\cite{Edelsbrunner1997, Valtr1994, Valtr1996} have established several combinatorial bounds for dense point sets that improve corresponding bounds for arbitrary point sets.

Let $P$ be a set of $n$ points in the plane with spread $\Delta$. Let $M$ be a perfect matching on $P$. We prove that $M$ can be untangled by $O(n\Delta)$ flips, i.e., $f(M)=O(n\Delta)$. The main idea of our proof is as follows.
Let $\cpd$ be the minimum distance between any pair of points in $P$. Let $|pq|$ denote the Euclidean distance between two points $p,q\in P$. Since $P$ has spread $\Delta$, we have
$|pq|\leqslant \cpd \Delta.$  
For the matching $M$ we define its {\em weight}, $w(M)$, to be the total length of its edges. Since $M$ has $n/2$ edges,
\begin{equation}
\label{matching-weight}
w(M)=\sum_{pq\in M} |pq|=O(n\cpd\Delta).
\end{equation}

Recall that a pair of crossing segments can be flipped in two different ways as depicted in Figure~\ref{flip-fig}(a). In the remainder of this section we show that one of these two flip operations reduces $w(M)$ by at least $t\cpd$, for some constant $t>0$. Combining this with Equality~\eqref{matching-weight} implies the existence of a flip sequence of length $O(n\Delta)$ that uncrosses $M$. 

Take any two crossing edges $pq$ and $rs$ in $M$, and let $o$ be their intersection point. We flip $pq$ and $rs$ to $ps$ and $rq$, if $\angle roq\leqslant \pi/2$, and to $pr$ and $qs$, otherwise. In other words, we flip $pq$ and $rs$ to the two edges that face the two smaller angles at $o$. In Lemma~\ref{flip-lemma} we prove that this flip reduces the length of edges by at least \changed{$\frac{2-\sqrt{2}}{4}\cpd'$}, where $\cpd'$ is the minimum distance between any pair of points in $\{p,q,r,s\}$. Since the minimum distance between pairs in $\{p,q,r,s\}$ is at least the minimum distance between pairs in $P$, our result follows.  
We use the following two lemmas in the proof of Lemma~\ref{flip-lemma}; we prove these two lemmas later.

\begin{lemma}
	\label{flip-lemma1}
	Let $ab$ and $cd$ be two crossing segments, and let $o$ be their intersection point. Let $\cpd''$ be the minimum distance between any pair of points in $\{a, b, c, d\}$. If $\angle cob\leqslant \pi/3$, then $$(|ab|+|cd|)-(|ad|+|cb|)\geqslant \cpd''.$$
\end{lemma}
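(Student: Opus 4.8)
The plan is to place the intersection point $o$ at the origin and work with the four radial distances $\alpha=|oa|$, $\beta=|ob|$, $\gamma=|oc|$, $\delta=|od|$ together with the angle $\theta=\angle cob\leqslant\pi/3$. Because $a,o,b$ are collinear with $o$ between them (and likewise $c,o,d$), we have $|ab|=\alpha+\beta$ and $|cd|=\gamma+\delta$, so the quantity to be bounded splits into two independent ``triangle deficits'',
\[(|ab|+|cd|)-(|ad|+|cb|)=(\alpha+\delta-|ad|)+(\beta+\gamma-|bc|).\]
Here $|ad|$ is the side opposite $o$ in triangle $aod$ and $|bc|$ the side opposite $o$ in triangle $boc$; crucially, both triangles have apex angle exactly $\theta$ at $o$, since $\angle aod$ and $\angle boc$ are vertical angles. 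I would prove the lemma by lower bounding each deficit separately and then recombining the two estimates.

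The key step is a single-triangle inequality: if a triangle has two sides of lengths $x,y$ meeting at an angle $\theta\leqslant\pi/3$, then $x+y-\sqrt{x^2+y^2-2xy\cos\theta}\geqslant\min(x,y)$. To prove it, assume without loss of generality that $y\leqslant x$; then the claim is equivalent to $x\geqslant\sqrt{x^2+y^2-2xy\cos\theta}$, i.e.\ (after squaring and dividing by $y$) to $y\leqslant 2x\cos\theta$, which holds because $\cos\theta\geqslant\cos(\pi/3)=\tfrac12$ gives $2x\cos\theta\geqslant x\geqslant y$. Applying this bound to triangle $aod$ (with $x,y=\alpha,\delta$) and to triangle $boc$ (with $x,y=\beta,\gamma$) yields $(|ab|+|cd|)-(|ad|+|cb|)\geqslant\min(\alpha,\delta)+\min(\beta,\gamma)$.

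Finally I would pass from the radial distances to $\cpd''$ by one more use of the triangle inequality. Let $P_1$ be whichever of $a,d$ is closer to $o$, so $|oP_1|=\min(\alpha,\delta)$, and let $P_2$ be whichever of $b,c$ is closer to $o$, so $|oP_2|=\min(\beta,\gamma)$. Since $P_1\in\{a,d\}$ and $P_2\in\{b,c\}$ are distinct points of $\{a,b,c,d\}$, their distance is at least $\cpd''$, whereas the triangle inequality gives $|P_1P_2|\leqslant|oP_1|+|oP_2|=\min(\alpha,\delta)+\min(\beta,\gamma)$; chaining these proves $(|ab|+|cd|)-(|ad|+|cb|)\geqslant\cpd''$. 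The main obstacle is obtaining the per-triangle bound with a sharp enough constant: a crude estimate such as $\tfrac34\min(x,y)$ (which comes from bounding the opposite side by $x+y$) is too lossy and can fall below $\cpd''$ in skewed configurations where one point is far from $o$; the clean bound $\geqslant\min(x,y)$, tight precisely when $x=y$ and $\theta=\pi/3$, is exactly what is needed to absorb the subsequent triangle-inequality step.
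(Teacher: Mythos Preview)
Your proof is correct and follows essentially the same approach as the paper: both split the deficit as $(\alpha+\delta-|ad|)+(\beta+\gamma-|bc|)$ and use that in a triangle with apex angle at most $\pi/3$ the opposite side is at most the longer of the two adjacent sides (your inequality $x+y-z\geqslant\min(x,y)$ is exactly the paper's observation that, e.g., $|ob|\geqslant|cb|$). Your final recombination via $P_1,P_2$ is a bit cleaner than the paper's two-case analysis, but the underlying argument is the same.
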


\begin{lemma}
	\label{flip-lemma2}
	Let $ab$ and $cd$ be two perpendicular segments that cross each other. Let $\cpd''$ be the minimum distance between any pair of points in $\{a, b, c, d\}$. Then, $$(|ab|+|cd|)-(|ad|+|cb|)\geqslant \changed{\frac{2-\sqrt{2}}{2}} \cpd''.$$
\end{lemma}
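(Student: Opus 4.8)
The plan is to place the crossing point $o$ at the origin with the two perpendicular segments along the coordinate axes, $ab$ along the $x$-axis and $cd$ along the $y$-axis. Because the segments properly cross, $o$ lies strictly between the endpoints of each, so $|oa|,|ob|,|oc|,|od|$ are four positive numbers with $a,b$ on opposite sides of $o$ on the $x$-axis and $c,d$ on opposite sides on the $y$-axis. Then $|ab|=|oa|+|ob|$ and $|cd|=|oc|+|od|$, while perpendicularity makes $aod$ and $boc$ right triangles with the right angle at $o$, so $|ad|=\sqrt{|oa|^2+|od|^2}$ and $|cb|=\sqrt{|ob|^2+|oc|^2}$. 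The key structural observation is that the quantity to be bounded then separates into two independent right-triangle terms, each of the form (legs minus hypotenuse):
\[
(|ab|+|cd|)-(|ad|+|cb|)=\bigl(|oa|+|od|-\sqrt{|oa|^2+|od|^2}\bigr)+\bigl(|ob|+|oc|-\sqrt{|ob|^2+|oc|^2}\bigr).
\]

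Next I would establish the one-variable inequality $x+y-\sqrt{x^2+y^2}\geq(2-\sqrt2)\min\{x,y\}$ for all $x,y>0$. Writing $t=\max\{x,y\}/\min\{x,y\}\geq1$, this reduces to $h(t):=1+t-\sqrt{1+t^2}\geq h(1)=2-\sqrt2$, which holds because $h'(t)=1-t/\sqrt{1+t^2}>0$, so $h$ is increasing on $[1,\infty)$. Applying this bound to each of the two terms in the displayed identity yields
\[
(|ab|+|cd|)-(|ad|+|cb|)\ \geq\ (2-\sqrt2)\bigl(\min\{|oa|,|od|\}+\min\{|ob|,|oc|\}\bigr).
\]

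The final step is to bound $\min\{|oa|,|od|\}+\min\{|ob|,|oc|\}$ below by $\cpd''$. Let $P_1$ be whichever of $a,d$ is closer to $o$ and $P_2$ whichever of $b,c$ is closer to $o$, so that this sum equals $|oP_1|+|oP_2|$. Since $P_1\in\{a,d\}$ and $P_2\in\{b,c\}$ are distinct points of $\{a,b,c,d\}$, the triangle inequality gives $|oP_1|+|oP_2|\geq|P_1P_2|\geq\cpd''$. Combining the three displays gives $(|ab|+|cd|)-(|ad|+|cb|)\geq(2-\sqrt2)\,\cpd''$, which is even stronger than the claimed $\tfrac{2-\sqrt2}{2}\,\cpd''$ and hence proves the lemma.

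I do not expect a serious obstacle; the argument is short once the correct decomposition is in hand. The two points needing care are (i) recognizing that perpendicularity lets the reduction split into two right-triangle legs-minus-hypotenuse terms, and (ii) the concluding triangle-inequality step, which converts $\min\{|oa|,|od|\}+\min\{|ob|,|oc|\}$ into a genuine inter-point distance and thereby lets me invoke the global minimum $\cpd''$ uniformly, without a case analysis over which endpoints happen to be nearer to $o$.
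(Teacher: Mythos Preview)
Your proof is correct and takes a genuinely different, cleaner route than the paper's. The paper drops perpendiculars from $o$ onto $ad$ and $cb$, bounds the resulting foot-of-perpendicular distances by $|oc|/\sqrt{2}$ and $|od|/\sqrt{2}$ (or $|oa|/\sqrt{2}$) via an angle argument, and then does a two-case analysis depending on which of $|oa|,|od|$ is larger, ending with $\frac{2-\sqrt{2}}{2}\,|cd|$ or $\frac{2-\sqrt{2}}{2}\,|ca|$. Your argument instead rewrites the deficit exactly as the sum of two legs-minus-hypotenuse terms, applies the scalar inequality $x+y-\sqrt{x^2+y^2}\geq(2-\sqrt{2})\min\{x,y\}$ (proved by a one-line monotonicity), and closes with a single triangle-inequality step $|oP_1|+|oP_2|\geq|P_1P_2|\geq\cpd''$. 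This avoids both the projections and the case split, and it yields the constant $2-\sqrt{2}$ rather than $\frac{2-\sqrt{2}}{2}$; the paper's own \emph{Note~1} acknowledges that its constants are not optimized, and your factor-of-two improvement confirms this.
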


\begin{figure}[htb]
	\centering
	\setlength{\tabcolsep}{0in}
	$\begin{tabular}{ccc}
	\multicolumn{1}{m{.33\columnwidth}}{\centering\includegraphics[width=.3\columnwidth]{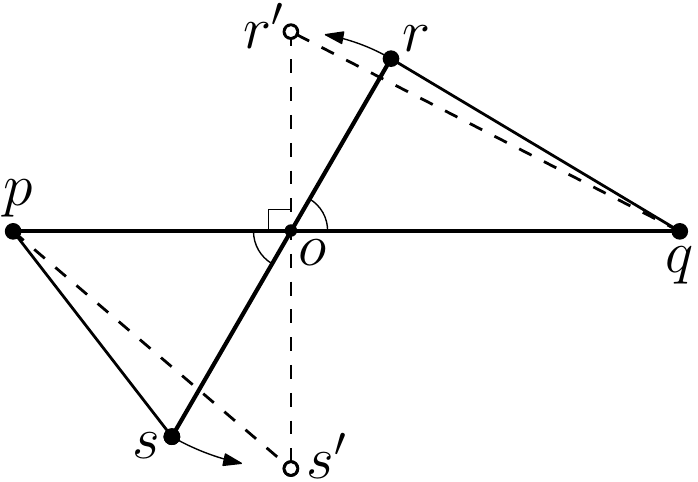}}
	&
	\multicolumn{1}{m{.33\columnwidth}}{\centering\includegraphics[width=.3\columnwidth]{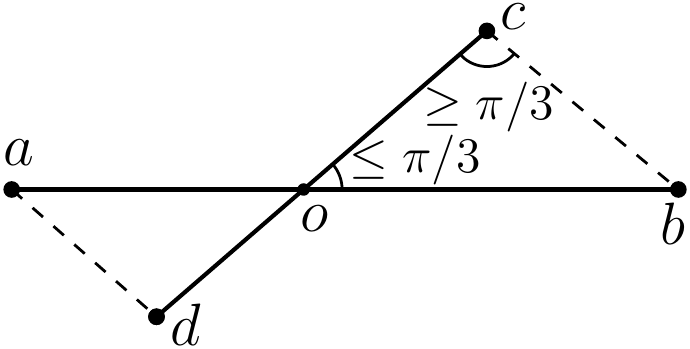}}
	&\multicolumn{1}{m{.33\columnwidth}}{\centering\includegraphics[width=.28\columnwidth]{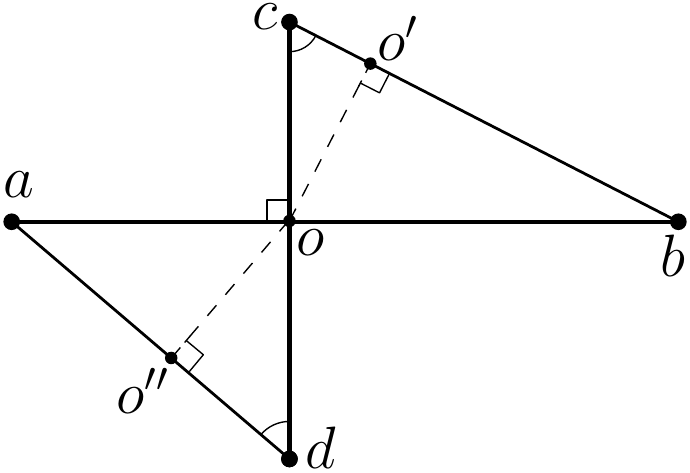}}
	\\
	(a) & (b)& (c)
	\end{tabular}$
	\caption{Illustrations of the proofs of (a) Lemma~\ref{flip-lemma}, (b) Lemma~\ref{flip-lemma1}, and (c) Lemma~\ref{flip-lemma2}.}
	\label{dense-fig}
\end{figure}

\begin{lemma}
	\label{flip-lemma}
	Let $pq$ and $rs$ be two crossing segments, and let $o$ be their intersection point. Let $\cpd'$ be the minimum distance between any pair of points in $\{p, q, r, s\}$. If $\angle roq\leqslant \pi/2$, then $$(|pq|+|rs|)-(|ps|+|rq|)\geqslant \changed{\frac{2-\sqrt{2}}{4}} \cpd'.$$
\end{lemma}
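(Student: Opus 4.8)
The plan is to reduce everything to the two lemmas already stated, splitting on the size of the acute angle $\theta:=\angle roq\le\pi/2$. First I would record the local picture at $o$. Since $p,o,q$ and $r,o,s$ are each collinear, the four rays at $o$ alternate, say in the cyclic order $p,r,q,s$, so the four angles at $o$ are $\angle roq=\angle pos=\theta$ and $\angle por=\angle qos=\pi-\theta$; the flipped edges $ps$ and $rq$ are precisely the two facing the copies of $\theta$. Placing $o$ at the origin with $pq$ on the $x$-axis and writing $a=|op|,\ b=|oq|,\ r_0=|or|,\ s_0=|os|$, the law of cosines gives the convenient separation
\[
(|pq|+|rs|)-(|ps|+|rq|)=g(a,s_0)+g(b,r_0),\qquad g(x,y)=x+y-\sqrt{x^2+y^2-2xy\cos\theta},
\]
which I would keep in mind although the cases below can be run purely geometrically.

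In the case $\theta\le\pi/3$ I would apply Lemma~\ref{flip-lemma1} verbatim with the correspondence $(a,b,c,d)=(p,q,r,s)$, so that its $\angle cob$ is our $\angle roq\le\pi/3$ and its flipped pair $(ad,cb)$ is our $(ps,rq)$. That lemma yields a reduction of at least $\cpd'$, which already exceeds $\tfrac{2-\sqrt2}{4}\cpd'$ since the constant is below $1$, so this case is immediate.

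The case $\pi/3<\theta\le\pi/2$ is where Lemma~\ref{flip-lemma2} should enter, via rotation into the perpendicular position. Let $r',s'$ be the images of $r,s$ under the rotation about $o$ that keeps them diametrically opposite and enlarges $\angle roq$ from $\theta$ to $\pi/2$, so $|or'|=r_0$, $|os'|=s_0$, $|r's'|=|rs|$, and $pq\perp\linepq{r'}{s'}$. Because $\cos\theta\ge0$, straightening the angle to a right angle only lengthens the two flipped edges, i.e.\ $|ps'|\ge|ps|$ and $|r'q|\ge|rq|$, hence
\[
(|pq|+|rs|)-(|ps|+|rq|)\ \ge\ (|pq|+|r's'|)-(|ps'|+|r'q|)\ \ge\ \tfrac{2-\sqrt2}{2}\,\cpd^*,
\]
where the last step is Lemma~\ref{flip-lemma2} applied to the perpendicular pair $pq,\,r's'$ with correspondence $(a,b,c,d)=(p,q,r',s')$, and $\cpd^*$ denotes the minimum pairwise distance among $\{p,q,r',s'\}$. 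It then remains only to compare $\cpd^*$ with $\cpd'$.

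For that comparison I would observe that rotating toward perpendicular preserves $|pq|$ and $|rs|$ and enlarges $|ps|,|qr|$ (to $|ps'|,|qr'|$), so the only distances that can shrink are $|pr|\to|pr'|$ and $|qs|\to|qs'|$. Using $|pr'|^2=a^2+r_0^2\ge\tfrac12(a+r_0)^2\ge\tfrac12|pr|^2$ (and symmetrically for $qs$), each shrinking distance loses at most a factor $1/\sqrt2$, so $\cpd^*\ge\cpd'/\sqrt2\ge\cpd'/2$. Substituting gives the claimed bound $\tfrac{2-\sqrt2}{2}\cdot\tfrac12\,\cpd'=\tfrac{2-\sqrt2}{4}\,\cpd'$, and this extra factor $\tfrac12$ is exactly where the constant drops from the perpendicular lemma to the stated one. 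I expect the main obstacle to be precisely this last step: verifying that straightening the crossing to a right angle cannot more than halve the minimum interpoint distance among the four points, which is where the hypothesis $\cos\theta\ge0$, i.e.\ $\theta\le\pi/2$, is essential.
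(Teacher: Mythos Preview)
Your proof is correct and follows essentially the same route as the paper: split on whether $\theta\le\pi/3$ (apply Lemma~\ref{flip-lemma1}) or $\pi/3<\theta\le\pi/2$ (rotate $rs$ about $o$ to the perpendicular position, apply Lemma~\ref{flip-lemma2}, and control how much $\cpd'$ can drop). The only difference is your bound on the shrinking distances: you use the one-line inequality $|pr'|^2=a^2+r_0^2\ge\tfrac12(a+r_0)^2\ge\tfrac12|pr|^2$, which gives $\cpd^*\ge\cpd'/\sqrt2$ and does not require $\theta>\pi/3$, whereas the paper argues geometrically via the isosceles triangle $\triangle ror'$ (using that the rotation angle is at most $\pi/6$) to obtain the weaker $\cpd''\ge\cpd'/2$.
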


\begin{proof}
	If $\angle roq< \pi/3$, then our claim follows from Lemma~\ref{flip-lemma1} where $p,q,r,s$ play the roles of $a,b,c,d$, respectively. Assume that $\angle roq\geqslant \pi/3$. Observe that $\angle roq =\angle pos$. 
	
	After a suitable rotation and/or a horizontal reflection and/or relabeling assume that $|pq|\geqslant |rs|$, $pq$ is horizontal, $p$ is to the left of $q$, and $r$ lies above $pq$. Rotate $rs$ counterclockwise about $o$, while keeping $o$ on this segment, until $rs$ is vertical. See Figure~\ref{dense-fig}(a). After this rotation, let $r'$ and $s'$ denote the two points that correspond to $r$ and $s$, respectively. 
	
	\vspace{7pt}	
	\noindent{\em Claim $1$. $|r'p|> |rp|/2$ and $|qs'|> |qs|/2$.}
	\vspace{7pt}		
	
	We prove only the first inequality of this claim; the proof of the second inequality is analogous. Since $r'p$ is the hypotenuse of the right triangle $\bigtriangleup r'op$, we have $|r'o|\leqslant |r'p|$. Since $\bigtriangleup r'or$ is isosceles and $\angle r'or\leqslant \pi/6$, we have $|rr'|<|r'o|$, and thus, $|rr'|<|r'p|$. By the triangle inequality we have $|rp|\leqslant |rr'|+|r'p|<2|r'p|$, which implies $|r'p|> |rp|/2$. This proves Claim 1. 
	\vspace{7pt}
	
	Observe that $|r'q|\geqslant |rq|$, $|ps'|\geqslant |ps|$, $|r's'|=|rs|$, and by Claim 1, $|r'p|\geqslant |rp|/2$ and $|qs'|\geqslant |qs|/2$. Thus, the minimum distance $\cpd''$ between any pair of points in $\{p,q,r',s'\}$ is not smaller than half the minimum distance between any pair of points in $\{p,q,r,s\}$, i.e., $\cpd''\geqslant \cpd'/2$. \changed{By Lemma~\ref{flip-lemma2} we have $(|pq|+|r's'|)-(|ps'|+|r'q|)\geqslant \frac{2-\sqrt{2}}{2}\cpd''$}, where $p,q,r',s'$ play the roles of $a,b,c,d$, respectively. Combining these inequalities, we get 
	\begin{align}
	\notag	(|pq|+|rs|)-(|ps|+|rq|)	& \geqslant (|pq|+|r's'|)-(|ps'|+|r'q|)\\
	\notag							& \geqslant \frac{2-\sqrt{2}}{2}\cpd'' \geqslant \frac{2-\sqrt{2}}{4}\cpd'.
	\end{align}
\end{proof}

\noindent{\bf Note 1.}{ The constants $\frac{2-\sqrt{2}}{2}$ and $\frac{2-\sqrt{2}}{4}$ in the proofs of Lemmas~\ref{flip-lemma2} and \ref{flip-lemma} are not optimized. To keep our proofs short and simple, we avoid optimizing these constants.}

\vspace{5pt}
\begin{wrapfigure}{r}{0.32\textwidth}
	\begin{center}
		\vspace{-7pt}
		\includegraphics[width=.3\textwidth]{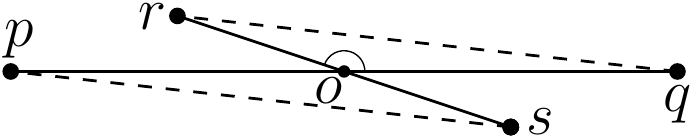}
	\end{center}
	\vspace{-12pt}
\end{wrapfigure}
\noindent{\bf Note 2.}{ The angle constraint in the statement of Lemma~\ref{flip-lemma} cannot be dropped; the figure to the right shows two crossing segments $pq$ and $rs$ for which $(|pq|+|rs|)-(|ps|+|rq|)$ tends to zero as $\angle roq$ tends to $\pi$.}

\begin{proof}[Proof of Lemma~\ref{flip-lemma1}]
	We recall the simple fact that the largest side of every triangle always faces the largest angle of the triangle. 
	Since $\angle cob\leqslant \pi/3$, we have that $\angle cbo\geqslant \pi/3$ or $\angle bco\geqslant \pi/3$. Without loss of generality assume that $\angle bco \geqslant \pi/3$, and thus, $\angle bco \geqslant\angle cob$; \changed{see Figure~\ref{dense-fig}(b)}. This implies that $|ob|\geqslant|cb|$. By a similar reasoning, we get that $|oa|\geqslant |ad|$ or $|od|\geqslant |ad|$. If $|oa|\geqslant|ad|$, then
	$$|ab|+|cd|-(|ad|+|cb|)= (|oa|+|ob|)+|cd|-(|ad|+|cb|)\geqslant|cd|\geqslant \cpd'',$$
	and if $|od|\geqslant|ad|$, then		
	$$|ab|+|cd|-(|ad|+|cb|)= (|oa|+|ob|)+(|oc|+|od|)-(|ad|+|cb|)\geqslant|oa|+|oc|\geqslant|ac|\geqslant \cpd''.\qedhere$$
\end{proof}

\begin{proof}[Proof of Lemma~\ref{flip-lemma2}]
	Refer to Figure~\ref{dense-fig}(c) for an illustration of the proof. Let $o$ be the intersection point of $ab$ and $cd$. Let $o'$ be the intersection point between $cb$ and the line \changed{through $o$} that is perpendicular to $cb$. Without loss of generality assume that $ob$ is longer than $oc$, i.e., $|ob|\geqslant |oc|$. Then $\angle ocb \geqslant \angle obc$, and thus, $\angle ocb\geqslant \pi/4$. Since $\angle oo'c=\pi/2$ and $\angle oco'=\angle ocb \geqslant \pi/4$, we get that $\angle coo'$ is the smallest angle in the triangle $\bigtriangleup oco'$, and thus, $o'c$ is its smallest side. \changed{Based on this and the fact that $\bigtriangleup coo'$ is a right triangle, we have
	
	$$|oc|^2=\sqrt{|oo'|^2+|o'c|^2}\geqslant \sqrt{2|o'c|^2},$$
	which implies that
	$|o'c|\leqslant |oc|/\sqrt{2}$.}
	
	Let $o''$ be the intersection point between $ad$ and the line \changed{through $o$} that is perpendicular to $ad$. We consider two cases depending on which of $oa$ and $od$ is longer.
	\begin{itemize}
		\item $|oa|> |od|$:		
		By a similar reasoning as for $ob$ and $oc$ we get that $|o''d|\leqslant |od|/\sqrt{2}$.
		Observe that $|ob|>|o'b|$ and $|oa|>|o''a|$. By combining these inequalities we get
		\begin{align}
		\notag	 (|ab|+|cd|)-(|ad|+|cb|)&=(|oa|+|ob|)+(|oc|+|od|)-(|o''a|+|o''d|)-(|o'c|+|o'b|)\\
		\notag							& >|oc|+|od|-|o''d|-|o'c|\geqslant |oc|+|od|-\frac{|od|}{\sqrt{2}}-\frac{|oc|}{\sqrt{2}}\\
		\notag							& = \left(1-\frac{1}{\sqrt{2}}\right)(|oc|+|od|)= \notag		\frac{2-\sqrt{2}}{2}|cd|\geqslant \frac{2-\sqrt{2}}{2}\cpd''.
		\end{align}
		\item $|oa|\leqslant |od|$:		
		Again, by a similar reasoning as for $ob$ and $oc$ we get that $|o''a|\leqslant |oa|/\sqrt{2}$. Also, by a similar reasoning as in the previous case we get		
		\begin{align}
		\notag	 (|ab|+|cd|)-(|ad|+|cb|)&\geqslant |oc|+|oa|-\frac{|oa|}{\sqrt{2}}-\frac{|oc|}{\sqrt{2}}\\
		\notag							& =	\frac{2-\sqrt{2}}{2}|ca|\geqslant \frac{2-\sqrt{2}}{2}\cpd''.
		\end{align}
	\end{itemize}
\end{proof}

\section{Points in Convex Position}
\label{convex-section}
In this section we study the problem of uncrossing perfect matchings and spanning trees on points in convex position.
For perfect matchings, Bonnet and Miltzow \cite{Bonnet2016} exhibited two perfect matchings $M_1$ and $M_2$ on $n$ points in the plane in convex position such that $f(M_1)\geqslant n/2-1$ and $F(M_2)\geqslant {n/2\choose 2}$. The following theorem provides matching upper bounds for $f(\cdot)$ and $F(\cdot)$.

\begin{theorem}
	\label{matching-convex-thr}
	For every perfect matching $M$ on a set of $n$ points in the plane in convex position we have $f(M)\leqslant \frac{n}{2}-1$ and $F(M)\leqslant {n/2\choose 2}$.
\end{theorem}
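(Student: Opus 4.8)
The plan is to prove the two bounds separately, using the convex-position structure to bound how the number of crossings can behave under a single, well-chosen flip. For points in convex position, a segment $pq$ is crossed by $rs$ exactly when the endpoints alternate around the convex hull, i.e. $r$ and $s$ lie on opposite arcs determined by $p$ and $q$. This combinatorial characterization is the tool I expect to lean on throughout.

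For the upper bound $F(M)\leqslant {n/2 \choose 2}$, the plan is to find a quantity that strictly decreases with every flip and is bounded by ${n/2\choose 2}$ initially. The natural candidate is the total number of crossing pairs of edges in $M$. In convex position, I would argue that flipping a crossing pair $pq, rs$ to the uncrossed pair can never create a new crossing between the two new edges and any third edge that was not already crossing one of the old edges; more precisely, I expect that each flip strictly decreases the crossing number of the matching by at least one, while the maximum possible number of crossing pairs among $n/2$ edges is ${n/2\choose 2}$. The main obstacle here is verifying that no flip \emph{increases} the crossing count: I would need to check, for a fixed third edge $e$, that after replacing $\{pq,rs\}$ by the uncrossed pair, the number of the two new edges that $e$ crosses is at most the number of old edges it crossed. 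This is a purely combinatorial case analysis on the cyclic positions of the six points involved, and it is where the convex-position hypothesis does the real work.

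For the bound $f(M)\leqslant \frac{n}{2}-1$, the plan is to exhibit an \emph{explicit} short flip sequence rather than to track a potential. I would try to always flip a crossing involving an edge that can be made a boundary edge (depth zero) of the convex hull, or more generally to perform flips that monotonically ``peel off'' free edges. The idea is that once an edge becomes free (uncrossed by all others), it never needs to be flipped again, and each flip can be chosen so as to free at least one new edge; since a plane matching on a convex $n$-gon has $n/2$ edges and we may assume at least one edge is already free (for instance a shortest edge, or a hull edge present in the matching), this should yield at most $\frac{n}{2}-1$ flips. Concretely I would induct on $n$: locate a crossing that, when flipped appropriately, produces an edge separating the point set into two convex sub-instances, reducing the problem size and letting the counts add up telescopically.

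I expect the $f$-bound to be the harder of the two, because it requires constructing a \emph{good} flip sequence (an existential, algorithmic statement) rather than bounding \emph{all} sequences, so I cannot simply rely on a monotone potential; the delicate point is guaranteeing that a single flip can always be chosen to free a fresh edge without disturbing edges already made free. The convex structure should guarantee such a flip exists — intuitively, flipping toward the ``outermost'' crossing edges pushes edges toward the hull boundary — and formalizing this via the depth notion introduced in the preliminaries is where I would concentrate the effort.
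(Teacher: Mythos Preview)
Your plan matches the paper's proof essentially line for line: for $F(M)$ the paper uses exactly the crossing-count potential you describe and checks (via the alternating-endpoints characterization) that any third edge crossing a new edge already crossed one of the old ones; for $f(M)$ the paper inducts on $n$ by using the minimum-depth edge to produce, with a single flip, a boundary edge $p_1p_2$, which is automatically free and leaves a convex $(n-2)$-point instance. One small caution: your aside that ``at least one edge is already free (for instance a shortest edge, or a hull edge present in the matching)'' is not a safe assumption---no hull edge need be present in $M$---so you should rely on the inductive/one-flip-to-a-boundary-edge argument you outline next, which is exactly what the paper does.
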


\begin{proof}
	The matching $M$ contains $n/2$ edges. First we prove that $F(M)\leqslant {n/2\choose 2}$. Notice that the number of crossings between the edges of $M$ is at most $n/2\choose 2$. We show that any flip reduces this number by at least one, and thus, our claim follows. Take any pair $ab$ and $cd$ of crossing edges of $M$. Flip this crossing, and let $ac$ and $bd$ be the new edges, after a suitable relabeling. After this flip operation, the crossing between $ab$ and $cd$ disappears. Moreover, any edge of $M$ that crosses $ac$ (or $bd$) used to cross $ab$ or $cd$, and any edge of $M$ that crosses both $ac$ and $bd$ used to cross both $ab$ and $cd$. Therefore, the total number of crossings reduces by at least one, and thus, our claim follows. 
	
	Now, we prove, by induction on $n$, that $f(M)\leqslant n/2-1$. If $n=2$, then $M$ has only one edge, and thus, $f(M)=0$. Assume that $n\geqslant 4$. First, we show how to transform $M$, by at most one flip, to a perfect matching $M'$ containing a boundary edge, i.e., an edge of the boundary of the convex hull. Let $p_1,\dots, p_n$ be the points in clockwise order. Let $p_ip_j$ be an edge of $M$ with minimum depth $m$. If $m=0$, then $M'=M$ is a matching in which $p_ip_j$ is a boundary edge. Suppose that $m\geqslant 1$. Without loss of generality assume that $i=1$ and $j=m+2$. Let $p_k$ be the point that is matched to $p_2$ by $M$. Because of the minimality of $m$, the edge $p_2p_k$ crosses $p_1p_{m+2}$. By flipping $p_2p_k$ and $p_1p_{m+2}$ to $p_1p_2$ and $p_{m+2}p_k$ we obtain $M'$ in which $p_1p_2$ is a boundary edge. Let $M''$ be the matching on $n-2$ points obtaining from $M'$ by removing a boundary edge. By the induction hypothesis, it holds that \vspace{-2pt}
	\[f(M)\leqslant 1 + f(M'')\leqslant 1+\left(\frac{n-2}{2}-1\right) = \frac{n}{2}-1.\qedhere\]
\end{proof}

In the rest of this section we study spanning trees. The argument of \cite{Leeuwen1981} for Hamiltonian cycles also extends to spanning trees, that is, if $T_1$ is a spanning tree on $n$ points in the plane, then $F(T_1)=O(n^3)$. Also, by an argument similar to the one in the proof of Theorem~\ref{matching-convex-thr}, it can easily be shown that for every spanning tree $T$ on $n$ points in the plane in convex position we have that $F(T)=O(n^2)$. In this section we prove that $f(T)=O(n\log n)$. Recall that a boundary edge is an edge of the boundary of the convex hull.

\begin{lemma}
	\label{tree-boundary-lemma}
	Any spanning tree on a point set in convex position can be transformed, by at most two flips, into a spanning tree containing a boundary edge.
\end{lemma}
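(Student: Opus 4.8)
The plan is to show that any spanning tree $T$ on a convex point set can be brought to contain a boundary edge using at most two flips. I would start from an edge of $T$ of minimum depth and argue that, after at most two flips that untangle this edge and ``push it outward'', we obtain a boundary edge. Let $p_1,\dots,p_n$ be the points in clockwise convex-position order. Pick an edge $e=p_ip_j$ of $T$ of minimum depth $m$ among all tree edges. If $m=0$ we are already done with zero flips, so assume $m\geqslant 1$. As in the proof of Theorem~\ref{matching-convex-thr}, I would normalise so that $e=p_1p_{m+2}$, meaning exactly $m$ points lie on the short side of $\linepq{p_1}{p_{m+2}}$.

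\medskip
\noindent\textbf{Key steps.} First I would locate a tree edge that crosses $e$. By minimality of the depth of $e$, no tree edge can separate $p_2$ (a point on the short side) cheaply: since $p_1p_2$ is a boundary edge of depth $0$ it is not in $T$ (else we are done), so in the tree $T$ the vertex $p_2$ is joined to the rest of the tree by some edge, and the path in $T$ from $p_2$ to $p_1$ must cross the segment $e=p_1p_{m+2}$ — one of the edges on this path, say $cd$, crosses $e$. The first flip replaces the crossing pair $\{e,cd\}$ by the two non-crossing edges determined by the (unique) tree-preserving flip of Figure~\ref{flip-fig}(c); I would choose the analysis so that the resulting tree contains an edge of strictly smaller depth than $m$, in particular an edge incident to $p_1$ or $p_2$ lying closer to the boundary. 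The second flip then removes any remaining crossing on this new short edge, yielding an edge of depth $0$. The main point is that reducing depth by a controlled amount lets us reach depth $0$ in at most two flips rather than $m$ of them.

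\medskip
\noindent\textbf{Main obstacle.} The crucial difference from the matching case is that in a tree a flip is forced: removing the crossing pair $\{e,cd\}$ leaves two subtrees, and there is exactly one way to reconnect them (Figure~\ref{flip-fig}(c)), so I cannot freely pick which of the two diagonals to add. Thus the hard part is verifying that this forced reconnection actually decreases depth (toward a boundary edge) rather than merely relocating the crossing, and bounding the number of such forced flips by two. I expect to handle this by a careful case analysis of where the endpoints $c,d$ of the crossing edge lie relative to the arc $p_2,\dots,p_{m+1}$: in the favourable case the forced flip directly produces the boundary edge $p_1p_2$ (one flip), and in the remaining case it produces an edge of depth $0$ on the other side or an edge incident to an extreme point whose single remaining crossing is cleared by the second flip. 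Ensuring that two flips always suffice — that no configuration forces a longer cascade — is the delicate step, and I would close it by choosing $e$ not merely of minimum depth but, among minimum-depth edges, one incident to an extreme vertex of the short arc, which pins down the forced reconnection.
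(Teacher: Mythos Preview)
Your setup is exactly right: take a minimum-depth edge $e=p_1p_{m+2}$, observe that every tree edge incident to a short-side vertex crosses $e$, and try to flip $e$ against such an edge. You also correctly isolate the real difficulty, namely that in a tree the flip is forced, so one cannot simply choose the pair $\{p_1p_2,p_{m+2}p_k\}$ at will.

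What is missing is the mechanism that actually resolves this difficulty. Your argument reduces to ``the first flip lowers the depth, and a second flip finishes''. But a single forced flip of $e$ with an edge $p_2p_k$ can very well produce $\{p_1p_k,\,p_2p_{m+2}\}$, whose best new depth is $m-1$; iterating that reasoning would cost $m$ flips, not two. The suggested fix of selecting, among minimum-depth edges, one ``incident to an extreme vertex of the short arc'' does not parse (no endpoint of $e$ lies on the short arc) and in any case does not control which of the two reconnections the tree forces.

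The paper's proof supplies the missing idea. To know which reconnection is forced, look at the \emph{tree path} between the relevant vertices and take its first edge as the crossing partner. Concretely, let $p_k$ be the neighbour of $p_2$ on the $T$-path $\delta$ from $p_2$ to $p_{m+2}$: if $\delta$ passes through $p_1$, then in $T\setminus\{p_1p_{m+2},p_2p_k\}$ the vertices $p_1$ and $p_k$ lie in the same component, so the forced flip is $\{p_1p_2,\,p_kp_{m+2}\}$ and one flip suffices. Symmetrically, letting $p_{k'}$ be the neighbour of $p_{m+1}$ on the path $\delta'$ from $p_{m+1}$ to $p_1$, if $\delta'$ passes through $p_{m+2}$ then one flip yields the boundary edge $p_{m+1}p_{m+2}$. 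In the remaining case (neither path passes through the far endpoint) one checks $k\neq k'$ and performs two specific flips: first flip $p_1p_{m+2}$ with $p_{m+1}p_{k'}$, which is now forced to produce $p_1p_{m+1}$ and $p_{m+2}p_{k'}$; then flip $p_1p_{m+1}$ with $p_2p_k$, which is forced to produce the boundary edge $p_1p_2$. The point is that the tree-path analysis pins down the forced direction each time; your proposal names this as the obstacle but does not supply the argument.
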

\begin{proof}
	Let $T$ be a spanning tree on $n$ points in the plane in convex position, and let $p_1,\dots, p_n$ be the points in clockwise order. Let $p_ip_j$ be an edge of $T$ with minimum depth $m$ (recall the definition of depth from Section~\ref{preliminaries-section}). If $m=0$, then $p_ip_j$ is a boundary edge. Suppose that $m\geqslant 1$. Without loss of generality assume that $i=1$ and $j=m+2$. Because of the minimality of $m$, all edges of $T$ that are incident on $p_2,\dots,p_{m+1}$ cross $p_1p_{m+2}$. We consider two cases with $m=1$ and $m>1$.
	\begin{itemize}
		\item $m=1$. In this case $p_{m+1}=p_2$ and $p_{m+2}=p_3$. Let $\delta$ be the path between $p_2$ to $p_3$ in $T$, and let $p_k$ be the vertex that is adjacent to $p_2$ in $\delta$. If $\delta$ contains $p_1$, then we flip $p_1p_3$ and $p_2p_k$ to $p_1p_2$ and $p_3p_k$; this gives a spanning tree in which $p_1p_2$ is a boundary edge. If $\delta$ does not contain $p_1$, then we flip $p_1p_3$ and $p_2p_k$ to $p_2p_3$ and $p_1p_k$; this gives a spanning tree in which $p_2p_3$ is a boundary edge. 
		\item $m>1$. Let $\delta$ be the path between $p_2$ to $p_{m+2}$ in $T$, and let $p_k$ be the vertex that is adjacent to $p_2$ in $\delta$. If $\delta$ contains $p_1$, then we flip $p_1p_{m+2}$ and $p_2p_k$ to $p_1p_2$ and $p_{m+2}p_k$; this gives the a spanning tree in which $p_1p_2$ is a boundary edge. Assume that $\delta$ does not contain $p_1$. Let $\delta'$ be the path between $p_{m+1}$ \changed{and} $p_1$ in $T$, and let $p_{k'}$ be the vertex that is adjacent to $p_{m+1}$ in $\delta'$; it may be that $k'=k$. If $\delta'$ contains $p_{m+2}$, then we flip $p_1p_{m+2}$ and $p_{m+1}p_{k'}$ to $p_{m+1}p_{m+2}$ and $p_1p_{k'}$; this gives a spanning tree in which $p_{m+1}p_{m+2}$ is a boundary edge. Assume that $\delta'$ does not contain $p_{m+2}$. See Figure~\ref{tree-fig}(a). In this case we have that $k'\neq k$, because otherwise $T$ would have a cycle. First we flip $p_1p_{m+2}$ and $p_{m+1}p_{k'}$ to $p_{1}p_{m+1}$ and $p_{m+2}p_{k'}$, then we flip $p_1p_{m+1}$ and $p_2p_k$ to $p_1p_2$ and $p_kp_{m+1}$; this gives a spanning tree in which $p_1p_2$ is a boundary edge. \qedhere
	\end{itemize} 
\end{proof}

\begin{figure}[htb]
	\centering
	\setlength{\tabcolsep}{0in}
	$\begin{tabular}{cc}
	\multicolumn{1}{m{.55\columnwidth}}{\centering\includegraphics[width=.51\columnwidth]{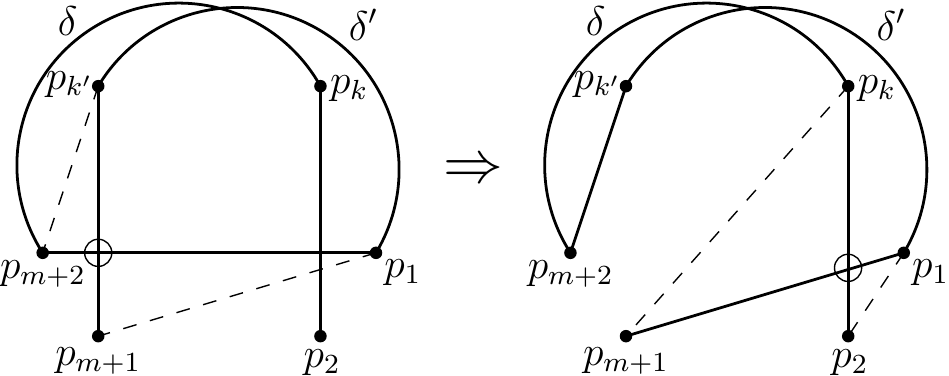}}
	&\multicolumn{1}{m{.45\columnwidth}}{\centering\includegraphics[width=.35\columnwidth]{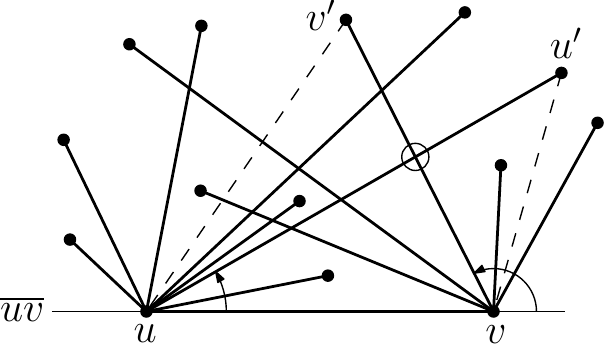}}
	\\
	(a) & (b)
	\end{tabular}$
	\caption{(a) Illustration of the proof of Lemma~\ref{tree-boundary-lemma}: Flipping $p_1p_{m+2}$ and $p_{m+1}p_{k'}$ to $p_{1}p_{m+1}$ and $p_{m+2}p_{k'}$, and then flipping $p_1p_{m+1}$ and $p_2p_k$ to $p_1p_2$ and $p_kp_{m+1}$. (b) Illustration of the proof of Lemma~\ref{tree-lemma}: $vv'$ is the first counterclockwise edge incident on $v$ that is crossed by some edges incident on $u$, and $uu'$ is the first counterclockwise edge incident on $u$ that crosses $vv'$.}
	\label{tree-fig}
\end{figure}

For the following lemma we do not need the vertices to be in convex position.

\begin{lemma}
	\label{tree-lemma}
	Let $T$ be a spanning tree containing an edge $uv$ such that every other edge is incident on either $u$ or $v$. Then $f(T)\leqslant \min(\deg{(u)},\deg{(v)})-1$, and this bound is tight.
\end{lemma}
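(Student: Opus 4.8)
The statement concerns a ``double star'' tree: one central edge $uv$, and every remaining edge attaches to $u$ or $v$. I want to show $f(T) \leqslant \min(\deg(u),\deg(v))-1$ and that this is tight. Without loss of generality assume $\deg(u) \leqslant \deg(v)$, so the target bound is $\deg(u)-1$. The leaves attached to $u$ (other than $v$) are the ones I will try to eliminate one at a time: each flip should relocate one $u$-edge so that it becomes incident to $v$ instead, strictly reducing the number of edges incident on $u$ while keeping the graph a spanning tree. Once only the edge $uv$ remains incident on $u$, every edge touches $v$, giving a star centered at $v$, which is plane. Since a star has no crossings, we are done, and the number of flips is at most the number of non-$v$ neighbors of $u$, which is $\deg(u)-1$.

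\textbf{Key steps.} First I would argue that as long as $T$ has a crossing, there must be an edge $uu'$ incident on $u$ that crosses some edge $vv'$ incident on $v$ (every crossing in this tree involves one $u$-edge and one $v$-edge, since two edges sharing an endpoint cannot cross). To choose the flip carefully I would, as Figure~\ref{tree-fig}(b) suggests, order the edges angularly: let $vv'$ be the first edge incident on $v$ (scanning counterclockwise) that is crossed by some $u$-edge, and let $uu'$ be the first $u$-edge (counterclockwise) crossing $vv'$. The flip removes $uu'$ and $vv'$ and adds the two non-crossing replacement edges; the goal is to pick the replacement so that $u'$ becomes attached to $v$ and the graph stays connected and acyclic. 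The crux is verifying that the extremal (first-counterclockwise) choice guarantees the new edge $vu'$ does not re-cross any surviving edge and that we genuinely decreased $\deg(u)$ by one without creating a cycle; this is where the angular-ordering choice does the real work, and it is the step I expect to be the main obstacle. After this flip, the tree is again a double star on the same $uv$ edge with $\deg(u)$ decreased by one, so induction on $\deg(u)$ completes the upper bound.

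\textbf{Tightness.} For the lower bound I would exhibit a configuration forcing $\deg(u)-1$ flips. The natural construction places the points so that each of the $\deg(u)-1$ edges incident on $u$ (besides $uv$) crosses a \emph{distinct} edge incident on $v$, and arrange the geometry so that no single flip can remove more than one of these $u$-incident edges at a time --- i.e., every flip decreases $\min(\deg(u),\deg(v))$ by exactly one. Concretely, I would use a convex-position (or near-convex) placement where the $u$-edges and $v$-edges interleave so that each flip is forced to transfer exactly one edge, yielding exactly $\deg(u)-1$ flips and matching the upper bound.

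\textbf{Anticipated obstacle.} The delicate point is the correctness of the single inductive flip: I must confirm that replacing the crossing pair $uu', vv'$ by the chosen new edges (a)~keeps the structure a spanning tree (no cycle, still connected), (b)~strictly reduces the number of edges incident on $u$, and (c)~does not introduce a crossing that was not present before in a way that would stall the argument. The counterclockwise-extremal selection of $vv'$ and $uu'$ is precisely what I would lean on to rule out (c), so carefully justifying why the first-crossing choice prevents the new edge from crossing previously uncrossed edges is the heart of the proof.
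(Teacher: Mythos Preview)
Your plan has a genuine gap at its core: the flip operation on a spanning tree does \emph{not} change vertex degrees. When you flip the crossing pair $uu'$ and $vv'$, the only replacement that keeps the graph a spanning tree is the pair $uv'$ and $vu'$; the alternative $\{uv,u'v'\}$ would re-use the already present edge $uv$. So after the flip $u'$ is indeed attached to $v$, but simultaneously $v'$ is now attached to $u$, and $\deg(u)$ is exactly what it was before. Your proposed progress measure, the number of edges incident on $u$, is invariant under every flip, and the induction stalls immediately. (The paper notes this explicitly: a flip does not change the degree of any vertex.) Consequently step~(b) in your ``anticipated obstacle'' is not merely delicate---it is false.

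The paper's argument uses a different measure. Taking $\deg(v)\leqslant\deg(u)$, it tracks the number of \emph{free} (uncrossed) edges incident on $v$. With the same counterclockwise-extremal selection of $vv'$ and $uu'$ you describe, the flip produces $vu'$ and $uv'$; the extremal choice guarantees that the new edge $vu'$ is free, and that every edge crossed by the new edge $uv'$ was already crossed by the old edge $uu'$, so the count of free edges at $v$ strictly increases. Since every crossing in a double star involves one $u$-edge and one $v$-edge, once all $\deg(v)$ edges at $v$ are free the tree is plane, giving the bound $\deg(v)-1$. For tightness the paper exhibits a double star in which each $v$-edge other than $uv$ is crossed by exactly one $u$-edge and no $u$-edge crosses two $v$-edges, so each flip can clear only one crossing; note that your proposed tight example, phrased in terms of degrees decreasing, does not make sense for the same reason as above.
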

\begin{proof}
	After a suitable rotation and/or a horizontal reflection and/or relabeling assume that $\linepq{u}{v}$ is horizontal, $u$ is to the left of $v$, and that $\deg{(v)}\leqslant \deg{(u)}$. The edges that are incident on points above $\linepq{u}{v}$ do not cross the edges incident on points below $\linepq{u}{v}$. Thus, the crossings above $\linepq{u}{v}$ can be handled independently of the ones below $\linepq{u}{v}$. Because of symmetry, we describe how to handle the crossings above $\linepq{u}{v}$. See Figure~\ref{tree-fig}(b). We show how to increase, by one flip, the number of free edges that are incident on $v$. By repeating this process, our claim follows. To that end, let $v'$ be the first vertex, in counterclockwise order, that is adjacent to $v$, and such that $vv'$ is crossed by at least one edge incident on $u$. Let $u'$ be the first vertex, in counterclockwise order, that is adjacent to $u$, and such that $uu'$ crosses $vv'$; see Figure~\ref{tree-fig}(b). Flip this crossing to obtain new edges $vu'$ and $uv'$. The edge $vu'$ is free, because otherwise $uu'$ cannot be the first counterclockwise edge that crosses $vv'$. Moreover, any edge that is crossed by $uv'$ used to be crossed by $uu'$. Thus, the number of free edges that are incident on $v$ increases by at least one. \changed{Notice that a flip operation does not change the degrees of vertices.} Therefore, by repeating this process, after at most $\deg{(v)}-1$ iterations, all incident edges on $v$ become free (notice that the edge $uv$ is already free); this transforms $T$ to a plane spanning tree. This proves the first statement of the lemma.
	
	Recall that the statement of this lemma is not restricted to points in convex position, and thus, the vertices of our tight example do not need to be in convex position. To verify the tightness of the bound, consider a tree in which every edge incident on $v$ (except $uv$) is crossed by exactly one of the edges incident on $u$, and every edge incident on $u$ crosses at most one of the edges incident on $v$. This tree needs exactly $\deg{(v)}-1$ flips to be transformed to a plane tree. 
\end{proof}

\begin{theorem}
	\label{tree-thr}
	For every spanning tree $T$ on $n$ points in the plane in convex position we have that $f(T)=O(n\log n)$.
\end{theorem}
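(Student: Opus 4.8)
The plan is to prove the bound by strong induction on $n$ using a balanced divide-and-conquer whose flip count obeys a recurrence of the form $\Phi(n)\le 2\,\Phi(n/2)+O(n)$, which solves to $O(n\log n)$. The mechanism for splitting is to \emph{isolate} a single vertex $c$: first transform $T$, by flips, into a spanning tree in which every edge incident to $c$ is free (uncrossed), and then recurse independently on the components of $T-c$. Choosing $c$ to be a centroid of $T$ guarantees that every component of $T-c$ has at most $n/2$ vertices, which is what keeps the recursion balanced.

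The geometric heart of the independence is the following observation, which uses convex position. Once the edges $cw_1,\dots,cw_k$ incident to $c$ are all free, these chords partition the convex hull into $k$ sectors, each bounded by two consecutive $c$-edges and an arc of the hull. Any tree edge joining two points in non-adjacent sectors would have to cross one of the free $c$-edges, which is impossible; hence every non-$c$ edge lies inside a single sector. Consequently the vertex set of each component of $T-c$ is a contiguous arc of the hull, distinct components occupy disjoint arcs, and chords contained in disjoint arcs cannot cross. Thus the components $S_1,\dots,S_k$ of $T-c$ are pairwise non-crossing, and every crossing of $T$ that does not involve $c$ lies strictly inside a single $S_i$. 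Each $S_i$ is itself a spanning tree of a point set in convex position, the free edges $cw_i$ are never disturbed by flips internal to the $S_i$, and so $f(T)\le(\text{flips to isolate }c)+\sum_i f(S_i)$. Since the subproblems at any fixed level of the centroid recursion are vertex-disjoint, the total degree summed over one level is $O(n)$ and the depth is $O(\log n)$; therefore, provided isolating $c$ costs $O(\deg_T c)$ flips, the grand total telescopes to $O(n\log n)$.

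The main obstacle is exactly this isolation step: bounding by $O(\deg_T c)$ the number of flips needed to make all edges at $c$ free. The natural route is to adapt the angular-sweep argument of Lemma~\ref{tree-lemma}: handle the two sides of a line through $c$ separately, and on each side repeatedly take the first (in counterclockwise order) crossed $c$-edge $cw$ together with the first edge $xy$ that crosses it, flipping this pair to create a new free edge incident to $c$. In the double-star setting of Lemma~\ref{tree-lemma} every edge that crosses a $v$-edge is incident to the common vertex $u$, and this is precisely what forces a single flip to produce a \emph{permanently} free edge; in a general tree the edges crossing $c$'s edges need not share a vertex, so the difficulty is to show that the sweep still makes monotone progress — that each flip strictly decreases the number of crossings incident to $c$-edges and never re-crosses an already-freed $c$-edge. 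Establishing this monotonicity (which is stronger than, but consistent with, the fact that every flip strictly decreases the total number of crossings, the fact underlying the $F(T)=O(n^2)$ bound) is the crux; once it is in place, the recurrence above gives $f(T)=O(n\log n)$.
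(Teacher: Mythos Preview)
Your proposal is a plan, not a proof: you yourself identify the isolation step---freeing all edges at $c$ in $O(\deg_T c)$ flips---as ``the crux'' and leave it unproven. That is a genuine gap. The reason Lemma~\ref{tree-lemma} works is precisely that every edge crossing a $v$-edge is incident to the single vertex $u$; this is what guarantees that the new edge $vu'$ is free and that $uv'$ only picks up crossings that $uu'$ already had. In your setting the crossing edges $xy$ need not share any vertex, and nothing prevents the flip of $cw$ with $xy$ from producing a new $c$-edge that is crossed by edges which never crossed $cw$. So the proposed sweep has no evident monotone potential, and you give no substitute argument.

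There is a second gap you do not flag. Your balance comes from choosing $c$ to be a centroid of the \emph{original} tree, but every flip you perform on a pair $(cw,xy)$ replaces $cw$ by $cx$ or $cy$, so after isolation the neighbours of $c$---and hence the components of $T-c$ you recurse on---are different from those of the original $T$. You correctly observe that after isolation the components are contiguous arcs, but nothing ties their sizes to the original centroid decomposition; with $\deg(c)=2$, for instance, one arc could have $n-2$ points, and the depth of your recursion is no longer $O(\log n)$.

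For comparison, the paper does not try to isolate a prescribed vertex at all. It uses Lemma~\ref{tree-boundary-lemma} to obtain (in at most two flips) \emph{some} boundary edge $uv$, contracts it, recurses on the smaller tree, then un-contracts; the only crossings introduced by un-contraction are between $u$-edges and $v$-edges, exactly the double-star situation where Lemma~\ref{tree-lemma} applies and costs $\min(\deg u,\deg v)-1$ flips. The $O(n\log n)$ bound then follows from a weighted-union (union--find) analysis, with no need for balanced subproblems.
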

\begin{proof}
	We present a recursive algorithm that uncrosses $T$ by $O(n\log n)$ flips. As for the base case, if $n\leqslant 3$, then $T$ is plane, and thus, no flip is needed. Assume that $n\geqslant 4$. By Lemma~\ref{tree-boundary-lemma}, by at most two flips, we can transform $T$ to a tree $T'$ containing a boundary edge $uv$. Contract the edge $uv$ and denote the resulting tree with $n-1$ vertices by $T''$; this can be done by removing the vertex $u$ together with its incident edges, and then connecting its neighbors, by straight-line edges, to $v$. We call every such new edge a {\em $u$-edge}. Recursively uncross $T''$ with $f(T'')$ flips. During the uncrossing process of $T''$, whenever we flip/remove a $u$-edge, we call the new edge that gets connected to $v$ a $u$-edge. After uncrossing $T''$ we return the vertex $u$ back and connect it to $v$. Then we remove every $u$-edge $vv'$, which is incident on $v$, and connect $v'$ to $u$. In the resulting tree, every crossing is between an edge that is incident on $u$ and an edge that is incident on $v$. Thus, after at most $2+f(T'')$ flips, $T$ can be transformed into a tree in which any two crossing edges are incident on $u$ and $v$. Then by Lemma~\ref{tree-lemma}, we can obtain a plane tree by performing at most $\min(\deg(u), \deg(v))-1$ more flips. Notice that the flip operation does not change the degree of vertices, and thus, every vertex in the resulting tree has the same degree as in $T$. Therefore, we have that
	\begin{align*}
	f(T)&\leqslant 2+f(T'')+\min(\deg{(u)},\deg{(v)})-1\\
	&= 1+\min(\deg{(u)},\deg{(v)})+f(T'').
	\end{align*} 
	
	It remains to show that $f(T)=O(n\log n)$. To that end, we interpret the above recursion by a union-find data structure with the linked-list representation and the weighted-union heuristic \cite[Chapter 21]{Cormen2001}.
	The number of flips in the above recursion can be interpreted as the total time for union operations as follows: each time that we contract an edge $uv$ and recurse on a smaller tree we perform at most $1+\min(\deg{(u)},\deg{(v)})$ flips. Consider every vertex $x$ of $T$ as a set with $\deg{(x)}$ elements. Also, assume that all the elements of these sets are pairwise distinct. Thus, we have $n$ disjoint sets of total size $2(n-1)$; this is coming from the fact that $T$ has $n-1$ edges and its total vertex degree is $2(n-1)$. The contraction of an edge $uv$ can be interpreted as a union operation of the sets $u$ and $v$ whose cost (number of flips) is at most $1+\min(|u|,|v|)$, where $|x|$ denotes the size of the set $x$. From the union-find data structure we have that the cost of a sequence of $s$ operations on $m$ elements is $O(s+m\log m)$. In our case, the number $m$ of elements is $2(n-1)$, and the number $s$ of union operations (edge contractions) is $n-3$ (no contraction is needed when we hit the base case). Thus, it follows that the total cost (the total number of flips) is $O(n\log n)$.  
\end{proof}

\section{Bichromatic Matchings}
\label{bichromatic-section}

In this section we study the problem of uncrossing perfect bichromatic matchings for points in convex position and for semi-collinear points. Let $n\geqslant 2$ be an even integer, and let $P$ be a set of $n$ points in the plane, $n/2$ of which are colored red and $n/2$ are colored blue. If $P$ is in general position, then for any bichromatic matching $M$ on $P$, the best known upper bound for both $f(M)$ and $F(M)$ is the $O(n^3)$ bound that has been proved in \cite{Bonnet2016, Leeuwen1981}. If $P$ is in convex position, the $n/2-1$ and $n/2\choose 2$ lower bounds that are shown in \cite{Bonnet2016} for $f(\cdot)$ and $F(\cdot)$, respectively, in the monochromatic setting, also hold in the bichromatic setting. Theorem~\ref{matching-convex-thr} implies that the $n/2\choose 2$ bound for $F(\cdot)$ is tight. The following theorem gives an upper bound on $f(\cdot)$ for points in convex position.

\begin{figure}[htb]
	\centering
	\includegraphics[width=.67\columnwidth]{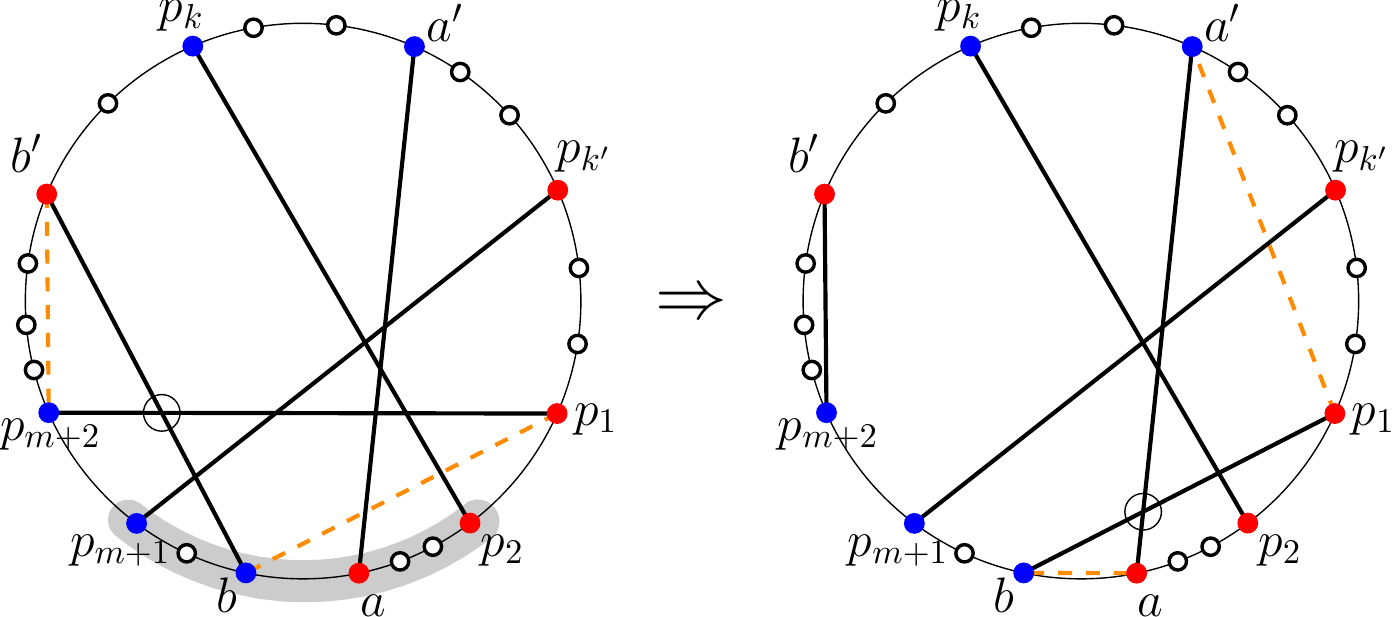}
	\caption{Illustration of the proof of Theorem~\ref{bimatching-convex-thr}. Flipping $bb'$ and $p_1p_{m+2}$ to $b'p_{m+2}$ and $bp'$, and then flipping $bp_1$ and $aa'$ to $p_1a'$ and $ab$.}
	\label{bimatching-convex-fig}
\end{figure}

\begin{theorem}
	\label{bimatching-convex-thr}
	For every perfect bichromatic matching $M$ on $n$ points in the plane in convex position we have that $f(M)\leqslant n-2$.
\end{theorem}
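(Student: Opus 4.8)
The plan is to prove the bound by induction on $n$, removing two points per step while spending at most two flips, so that the total is at most $2\left(\frac{n}{2}-1\right)=n-2$. Concretely, I would show that every bichromatic matching $M$ on points in convex position can be turned, using at most two flips, into a bichromatic matching that contains a \emph{boundary} edge $e$ of the convex hull. Since $e$ is a hull edge, all remaining points lie strictly on one side of the line through it, so $e$ is free and stays free no matter how the others are matched; hence I may delete $e$ together with its two endpoints and recurse on the remaining $n-2$ points, which are again in convex position and carry a bichromatic perfect matching $M''$. This gives $f(M)\le 2+f(M'')$, and the base case $n=2$ (a single, already plane edge) gives $f=0$; unrolling the recursion yields the claimed $f(M)\le n-2$.

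To build the boundary edge I would reuse the minimum-depth setup of Theorem~\ref{matching-convex-thr} and Lemma~\ref{tree-boundary-lemma}. Let $p_1,\dots,p_n$ be the points in clockwise order and let $p_1p_{m+2}$ be a matching edge of minimum depth $m$. If $m=0$ this edge is already a bichromatic boundary edge and no flip is needed. Otherwise minimality forces each short-arc point $p_2,\dots,p_{m+1}$ to be matched to a point outside $\{p_1,\dots,p_{m+2}\}$ (a chord with both endpoints inside this arc would have strictly smaller depth), so every edge incident to a short-arc point crosses $p_1p_{m+2}$.

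The new ingredient is color, since the bichromatic flip is forced. As $p_1p_{m+2}\in M$, its endpoints are of opposite colors; say $p_1$ is blue and $p_{m+2}$ is red. Reading colors along the arc $p_1,\dots,p_{m+2}$ from blue to red, let $j$ be the first index whose color is red, so $p_{j-1}$ is blue, $p_j$ is red, and $p_{j-1}p_j$ is a bichromatic hull edge. If $j=2$ (the reflected case $j=m+2$ is symmetric), then flipping $p_1p_{m+2}$ against the edge $p_2p_k$ matched to $p_2$ is the only bichromatic reconnection and it produces the boundary edge $p_1p_2$ in a single flip. If $2<j\le m+1$, I would use two flips: first flip $p_1p_{m+2}$ against the edge $p_jp_y$ matched to $p_j$; as the alternative reconnection would be monochromatic, the forced flip produces the edge $p_1p_j$. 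The edge $p_{j-1}p_x$ matched to $p_{j-1}$ is untouched and still crosses $p_1p_j$, because the chord $p_1p_j$ separates $p_{j-1}$ from the long-arc point $p_x$; so a second flip of $p_1p_j$ against $p_{j-1}p_x$ is legal, and since $p_1$ and $p_{j-1}$ are both blue, its only bichromatic reconnection must yield the boundary edge $p_{j-1}p_j$ (matching the two-flip sequence in Figure~\ref{bimatching-convex-fig}).

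The main obstacle—and the reason the monochromatic one-flip argument does not transfer verbatim—is precisely that the bichromatic flip is forced, so I cannot freely choose the two new edges. Each flip above must be checked to (i) be legal, i.e. the two edges genuinely cross, and (ii) land on the intended edge, i.e. the competing reconnection is monochromatic and therefore forbidden. Choosing $j$ as the \emph{first} color change is what secures this: it guarantees $p_{j-1}$ has the same color as $p_1$, which is exactly what rules out the wrong reconnection in the second flip. The one genuinely geometric point to verify carefully is that the intermediate edge $p_1p_j$ is crossed by $p_{j-1}p_x$, for which I would use the convex-position separation argument sketched above. Once these legality and color checks are in place, the induction closes and gives $f(M)\le n-2$.
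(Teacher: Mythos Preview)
Your proposal is correct and follows essentially the same approach as the paper: induction on $n$, selecting a minimum-depth matching edge $p_1p_{m+2}$, and producing a bichromatic boundary edge in at most two flips by locating the first colour change along the short arc. Your two-flip sequence (flip $p_1p_{m+2}$ with the edge at $p_j$, then flip the resulting $p_1p_j$ with the edge at $p_{j-1}$) is exactly the paper's sequence with the colours swapped; the paper additionally checks the endpoints $p_2$ and $p_{m+1}$ first to sometimes finish in one flip, but this optimisation is irrelevant for the stated bound.
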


\begin{proof}
	Our proof is by induction on $n$. 
	If $n=2$, then $f(M)=0$. Assume that $n\geqslant 4$. First we show how to transform $M$, by at most two flips, to a perfect bichromatic matching $M'$ containing a boundary edge. Let $p_1,\dots, p_n$ be the points in clockwise order. Let $p_ip_j$ be an edge of $M$ with minimum depth $m$. If $m=0$, then $M'=M$ is a matching in which $p_ip_j$ is a boundary edge. Suppose that $m\geqslant 1$. Without loss of generality assume that $i=1$, $j=m+2$, $p_1$ is red, and $p_{m+2}$ is blue as in Figure~\ref{bimatching-convex-fig}. Let $p_k$ and $p_{k'}$ be the points that are matched to $p_2$ and $p_{m+1}$, respectively; it may be that $m+1=2$ and $k'=k$. Because of the minimality of $m$, all edges that are incident on points $p_2,\dots, p_{m+1}$ cross $p_1p_{m+2}$. If $p_2$ is blue, then by flipping $p_1p_{m+2}$ and $p_2p_k$ to $p_1p_2$ and $p_{m+2}p_k$ we obtain $M'$ in which $p_1p_2$ is a boundary edge. Assume that $p_2$ is red. If $p_{m+1}$ is red, then by flipping $p_1p_{m+2}$ and $p_{m+1}p_{k'}$ to $p_{m+1}p_{m+2}$ and $p_1p_{k'}$ we obtain $M'$ in which $p_{m+1}p_{m+2}$ is a boundary edge. Assume that $p_{m+1}$ is blue. See Figure~\ref{bimatching-convex-fig}. \changed{It follows that} $p_2$ and $p_{m+1}$ have different colors, and thus, $m+1\neq 2$ and $k'\neq k$.
	
	For an illustration of the rest of the proof, follow Figure~\ref{bimatching-convex-fig}. The sequence $p_2,\dots, p_{m+1}$ starts with a red point and ends with a blue point. Thus, in this sequence there are two points of distinct colors, say $a$ and $b$, that are consecutive. Let $b$ be the first blue point after $p_1$. Let $a'$ and $b'$ be the two points that are matched to $a$ and $b$ respectively.  By flipping $bb'$ and $p_1p_{m+2}$ to $b'p_{m+2}$ and $bp_1$, and then flipping $bp_1$ and $aa'$ to $p_1a'$ and $ab$ we obtain $M'$ in which $ab$ is a boundary edge.
	
	Let $M''$ be the bichromatic matching on $n-2$ points obtaining from $M'$ by removing a boundary edge.  
	By the induction hypothesis, it holds that \vspace{-5pt}
	\[f(M)\leqslant 2 + f(M'')\leqslant 2+\left((n-2)-2\right) = n-2.\qedhere\]
\end{proof}

In the rest of this section we study the case where $P$ is semi-collinear, i.e., its blue points are on a straight line and its red points are in general position. Semi-collinear points have been studied in may problems related to plane matchings (see e.g., \cite{Aloupis2013, Biniaz2014a, Carlsson2015}). We prove that for every perfect bichromatic matching $M$ on $P$, it holds that $f(M)=O(n^2)$. 
Before we prove this upper bound, observe that similar to the general position setting, in the semi-collinear setting the total number of crossings might increase after a flip. Also, it is possible that a crossing, that has disappeared after a flip, reappears after some more flips (see the crossing between $br$ and $b'r'$ in Figure~\ref{reappear-fig}). The $O(n^2)$ upper bound given in \cite{Bonnet2016} for $f(\cdot)$ on uncolored points, which is obtained by connecting the two leftmost points of a crossing, does not apply to our semi-collinear bichromatic setting, because in this setting the two leftmost points might have the same color. These observations \changed{suggest} that there is no straightforward way of getting a good upper bound.

\begin{figure}[htb]
	\centering
	\includegraphics[width=.98\columnwidth]{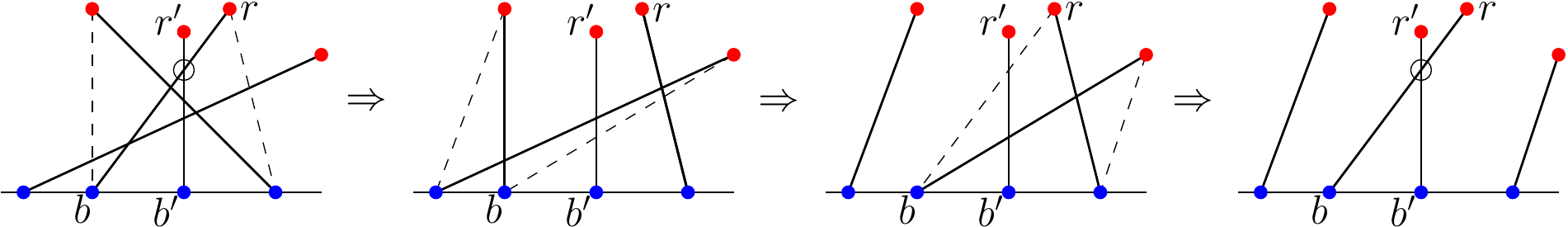}
	\caption{Reappearance of the crossing between $br$ and $b'r'$.}
	\label{reappear-fig}
\end{figure}

Let $\ell$ be the line that contains all the blue points of $P$. By a suitable rotation we assume that $\ell$ is horizontal.
For every perfect bichromatic matching $M$ on $P$, the edges of $M$, that are above $\ell$, do not cross the ones that are below $\ell$. Thus, we can handle these two sets of edges independently of each other. Therefore, in the rest of this section we assume that the red points of $P$ lie above $\ell$. Recall that $P$ contains $n/2$ blue points and $n/2$ red points.

\begin{lemma}
	\label{rightmost-crossing-lemma}
	Let $M$ be a perfect bichromatic matching on $P$ in which the rightmost blue point $b$ is matched to the topmost red point $r$. If $M\setminus\{br\}$ is plane, then $f(M)\leqslant \frac{n}{2}-1$, and this bound is tight.
\end{lemma}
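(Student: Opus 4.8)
The plan is to exploit the hypothesis that $M\setminus\{br\}$ is plane: every crossing of $M$ is a crossing of the single edge $br$, and since $M$ has $n/2$ edges, $br$ is crossed by at most $n/2-1$ of them. I would therefore aim for a flip sequence that removes the crossings \emph{one at a time}, spending exactly one flip per crossing, so that the whole sequence has length at most the initial number of crossings on $br$, namely at most $n/2-1$.

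To organise this I would maintain the invariant that the current matching has a single \emph{active} edge, incident to the topmost red point $r$, such that deleting it leaves a plane matching (so that all crossings are crossings of the active edge). Initially the active edge is $br$ and the invariant holds by hypothesis. Suppose the active edge is $rc$ (with $c$ blue) and still has a crossing. Recall that a bichromatic flip of $rc$ with a crossing edge $b^{*}r^{*}$ is \emph{forced} to produce the pair $\{\,r b^{*},\,c r^{*}\,\}$ (each red paired with a blue), and that this pair is crossing-free. I would select the crossing edge $b^{*}r^{*}$ whose intersection point $o$ with $rc$ is extreme along $rc$ (closest to the blue endpoint $c$) and declare $r b^{*}$ the new active edge. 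The new non-active edge $c r^{*}$ then bounds, together with the sub-segment $co$ of $rc$ and the sub-segment $o r^{*}$ of $b^{*}r^{*}$, the triangle $\bigtriangleup c\,o\,r^{*}$; by the extreme choice of $o$ and because $b^{*}r^{*}$ lay in the plane part, both of these sides are uncrossed.

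The crux — and the step I expect to be the main obstacle — is to show that this flip keeps the invariant while \emph{strictly decreasing} the number of crossings, i.e. that $c r^{*}$ is free and that $r b^{*}$ is crossed only by crossing edges that already crossed $rc$. This is exactly the reappearance phenomenon warned about in Figure~\ref{reappear-fig}, which must here be ruled out. Any edge crossing $c r^{*}$ would have to enter $\bigtriangleup c\,o\,r^{*}$ through the side $c r^{*}$ without meeting the two uncrossed sides, hence would have an endpoint strictly inside the triangle; a blue endpoint cannot, since the blue points are collinear and the triangle meets their line $\ell$ only at $c$, so the entire difficulty is to forbid a \emph{red} endpoint inside. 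This is precisely where I would use the semi-collinear structure (all blue points on $\ell$, $r$ the topmost red) together with the laminar order of the pairwise non-crossing crossing edges along $rc$ to prove that the extreme choice leaves the triangle empty and introduces no crossing on $r b^{*}$; I expect the collinearity of the blue points to be essential here. Granting this, each flip removes one crossing and restores the invariant with the new active edge $r b^{*}$ incident to $r$, so after at most $n/2-1$ flips the active edge is free, the matching is plane, and $f(M)\le n/2-1$.

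For tightness I would construct a semi-collinear bichromatic matching in which $b$ is matched to $r$, the remaining $n/2-1$ edges form a plane matching, and $br$ crosses all of them in a nested fashion, so that no crossing can be destroyed except by a flip involving it and every flip resolves only one crossing. A monovariant tailored to this instance — a counter that any single flip can change by at most one, equal to $n/2-1$ for the initial matching and $0$ for every plane matching — would then force each valid flip sequence to have length at least $n/2-1$, matching the upper bound.
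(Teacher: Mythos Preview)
Your invariant and iteration scheme are close to the paper's, but the monovariant you rely on—the number of crossings of the active edge—does \emph{not} always decrease after your flip, so the bound ``at most the initial number of crossings on $br$'' fails. Take $\ell$ to be the $x$-axis, $b=(5,0)$, $r=(0,10)$, $b^{*}=(-5,0)$, $r^{*}=(6,1)$, $b_1=(-10,0)$, $r_1=(0,5)$, and $M=\{br,\,b^{*}r^{*},\,b_1r_1\}$. Here $b$ is the rightmost blue point, $r$ is the topmost red point, and $M\setminus\{br\}$ is plane; moreover $br$ is crossed only by $b^{*}r^{*}$, so your bound would predict a single flip suffices. The forced bichromatic flip yields $br^{*}$ and $b^{*}r$, and $br^{*}$ is indeed free (your triangle argument is fine here). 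But the new active edge $b^{*}r$ is crossed by $b_1r_1$, which did \emph{not} cross $br$ before. So the crossing count of the active edge stayed at $1$, and a second flip is required (which is still within $n/2-1=2$, but not within your claimed ``initial number of crossings'').

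The paper's proof avoids this trap with two differences. First, the selection rule is angular: $r'$ is the first red point in counterclockwise order around $b$, which gives directly that $br'$ is free. Second—and this is the essential idea you are missing—after the flip the line $\overline{b'r'}$ \emph{separates} the matching: all edges to the right of this line (together with $br'$) are plane, because $r$ is the topmost red point, and the remaining edges $M''$ lie entirely to the left. One then recurses on $M''$, which has at most $n-2$ points, with $b'$ now the rightmost blue point \emph{in $M''$} and $r$ still the topmost red. The decreasing quantity is the \emph{size} of the sub-matching, not the crossing count of the active edge. Your global iteration cannot see this separation, which is why a previously free edge like $b_1r_1$ can re-enter as a crossing of the new active edge.
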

\begin{proof}
	See Figure~\ref{rightmost-crossing-fig}(a) for an illustration of the statement of this lemma; notice that if we remove $br$ from $M$, then we get a plane matching. Our proof is by induction on $n$. If $n=2$, then $M$ has one edge which is plane, and thus, $f(M)=0$. Assume that $n\geqslant 4$. If $br$ does not intersect any other edge, then $M$ is plane and $f(M)=0$. Suppose that $br$ intersects some edges of $M\setminus\{br\}$, and let $R'$ be the set of the red endpoints of those edges; see Figure~\ref{rightmost-crossing-fig}(a). Let $r'$ be the first red point in the counterclockwise order of the red points around $b$; observe that $r'$ belongs to $R'$. Let $b'$ be the blue point that is matched to $r'$. Flip $br$ and $b'r'$ to $br'$ and $b'r$ as in Figure~\ref{rightmost-crossing-fig}(b), and let $M'$ be the resulting matching. The edge $br'$ does not cross any other edge of $M'$, because of our choice of $r'$, but the edge $b'r$ may cross some edges of $M'$. Let $M''$ be the subset of edges of $M'$ that are to the left of $\linepq{b'}{r'}$; see Figure~\ref{rightmost-crossing-fig}(b). Notice that $br'\notin M''$, and thus $M''$ is a matching on at most $n-2$ points. 
	
	\begin{figure}[htb]
		\centering
		\setlength{\tabcolsep}{0in}
		$\begin{tabular}{ccc}
		\multicolumn{1}{m{.35\columnwidth}}{\centering\includegraphics[width=.33\columnwidth]{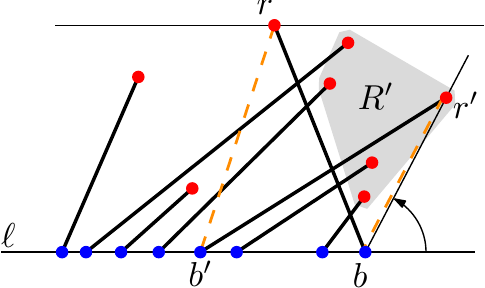}}
		&\multicolumn{1}{m{.37\columnwidth}}{\centering\includegraphics[width=.34\columnwidth]{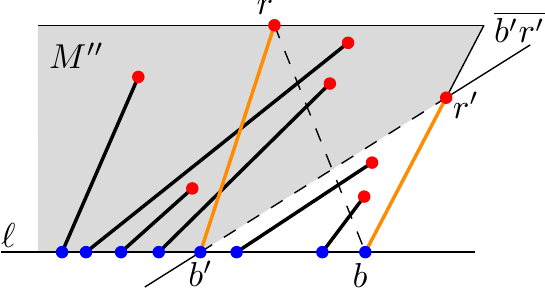}}
		&\multicolumn{1}{m{.28\columnwidth}}{\centering\includegraphics[width=.28\columnwidth]{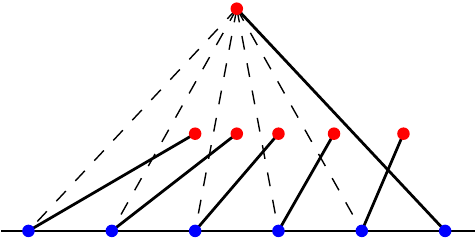}}
		\\
		(a) & (b)&(c)
		\end{tabular}$
		\caption{Illustration of the proof of Lemma~\ref{rightmost-crossing-lemma}.}
		\label{rightmost-crossing-fig}
	\end{figure}
	
	Because of the planarity of $M\setminus\{br\}$ and since $r$ is the topmost red point, we have that $M'\setminus M''$ is plane. Moreover, $M''$ and $M'\setminus M''$ are separated by $\linepq{b'}{r'}$. 
	Observe that $b'$ is the rightmost blue point in $M''$ that is matched to the topmost red point $r$, moreover, $M''\setminus\{b'r\}$ is plane. Therefore, we can repeat the above process on $M''$, which is a smaller instance of the initial problem. By the induction hypothesis, it holds that $$f(M)= 1 + f(M'')\leqslant 1+\left(\frac{n-2}{2}-1\right) = \frac{n}{2}-1.$$

	To verify the tightness, Figure~\ref{rightmost-crossing-fig}(c) shows a matching example for which we need exactly $n/2{-}1$ flips to transform it to a plane matching. Each time there exists exactly one crossing, and after flipping that crossing, only one other crossing appears (except for the last flip). 
\end{proof}

\begin{figure}[htb]
	\centering
	\setlength{\tabcolsep}{0in}
	$\begin{tabular}{cc}
	\multicolumn{1}{m{.5\columnwidth}}{\centering\includegraphics[width=.49\columnwidth]{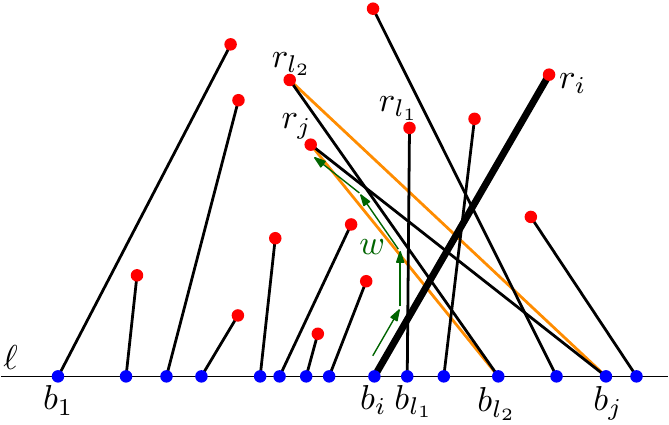}}
	&\multicolumn{1}{m{.5\columnwidth}}{\centering\includegraphics[width=.49\columnwidth]{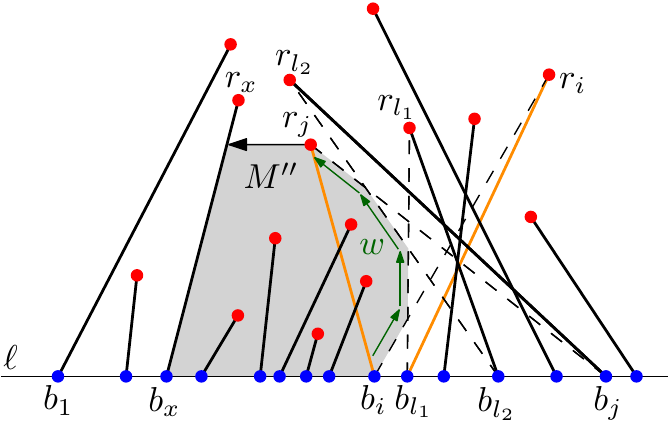}}
	\end{tabular}$
	\vspace{-5pt}
	\caption{Illustration of the proof of Theorem~\ref{semi-collinear-thr}. \changed{From the left configuration, we will get to the right configuration by flipping the following crossing pairs: $(r_jb_j,r_{l_2}b_{l_2})$, $(r_jb_{l_2},r_{l_1}b_{l_2})$, and $(r_jb_{l_1},r_ib_i)$.}}
	\label{rightmost-crossing-fig2}
\end{figure}
\vspace{-10pt}

\begin{theorem}
	\label{semi-collinear-thr}
	For every perfect bichromatic matching $M$ on $P$ we have \changed{$f(M)\leqslant \frac{n^2}{4}-\frac{n}{2}$}.
\end{theorem}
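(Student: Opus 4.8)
The plan is to prove the bound by induction on $n$, reducing each level to the situation handled by Lemma~\ref{rightmost-crossing-lemma}. Recall that we have already reduced to the case where every red point lies above the line $\ell$ through the blue points, so that edges above $\ell$ and below $\ell$ never interact. The base case $n=2$ is immediate, since then $M$ is a single plane edge and $f(M)=0$. For the inductive step fix the rightmost blue point $b$ and the topmost red point $r$; these are intrinsic features of $P$ and are therefore unaffected by flips or by the deletion of other points. I would uncross $M$ in three stages.

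First (the walk), using at most $\frac n2-1$ flips I would transform $M$ into a matching $M_1$ in which $b$ is matched to $r$; this is the step illustrated in Figure~\ref{rightmost-crossing-fig2}, where a single red point is carried from one blue partner to the next by successive flips. Second (the recursion), leaving the edge $br$ untouched, I would apply the induction hypothesis to the bichromatic matching $M_1\setminus\{br\}$ on the remaining $n-2$ (still semi-collinear) points: any flip between two edges of $M_1\setminus\{br\}$ is a legitimate flip of the full matching, so this costs at most $\frac{(n-2)^2}{4}-\frac{n-2}{2}$ flips and yields a matching $M_2$ with $M_2\setminus\{br\}$ plane. At this point $b$ is the rightmost blue point matched to the topmost red point $r$ and everything else is plane, which is exactly the hypothesis of Lemma~\ref{rightmost-crossing-lemma}; third, that lemma finishes with at most $\frac n2-1$ further flips. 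Adding the three contributions gives $f(M)\le(\frac n2-1)+\bigl(\frac{(n-2)^2}{4}-\frac{n-2}{2}\bigr)+(\frac n2-1)=\frac{n^2}{4}-\frac n2$, completing the induction.

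The routine parts are the recursion and the final appeal to Lemma~\ref{rightmost-crossing-lemma}. For the recursion the only thing to verify is that restricting attention to $M_1\setminus\{br\}$ is legitimate: since we are bounding $f(\cdot)$ we are free to choose which crossing to flip, so crossings that involve $br$ (including the reappearing ones warned about before the theorem) can simply be ignored until the last stage. The main obstacle is the first stage — proving that the walk always reaches the configuration ``rightmost blue matched to topmost red'' in at most $\frac n2-1$ flips. The difficulty is precisely the pathology highlighted just before the statement: the edge one wishes to advance may already be free, so that no flip is available to move it, and crossings may reappear, so the walk cannot be a naive monotone sweep. I expect the correct argument to use a selection rule based on the angular order of the points around $b$, dually to the choice of $r'$ in the proof of Lemma~\ref{rightmost-crossing-lemma}, together with a case analysis handling free edges (where one advances through the remaining configuration rather than across a crossing). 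Making this rule well defined, and showing both that an admissible flip always exists and that each flip moves the distinguished partner through a fresh vertex so that the walk terminates within $\frac n2-1$ steps, is where the real work lies.
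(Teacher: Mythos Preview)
Your three-stage inductive scheme is genuinely different from the paper's argument, and the arithmetic and Stages~2--3 are sound: if Stage~1 held as stated, the proof would go through. But Stage~1 is a real gap, and I do not see how to close it along the lines you sketch. You are trying to force a \emph{prescribed} pair (rightmost blue~$b$, topmost red~$r$) to become matched, yet flips exist only at crossings. If the edges currently at $b$ and at $r$ are both free and do not cross one another, no flip changes either partner; you would first have to manufacture a crossing against one of them by flipping elsewhere, with no evident monotone progress measure. A small instance already breaks the literal statement: with $n=6$ and $r_3$ topmost, one can place the points so that $b_3r_1$ is free while $b_1r_3$ and $b_2r_2$ cross; the unique flip then makes the matching plane with $b_3$ still matched to $r_1\neq r_3$. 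Your angular-order heuristic, borrowed from the proof of Lemma~\ref{rightmost-crossing-lemma}, presupposes that the edge at $b$ already participates in a crossing, which is exactly what fails here; and ``advancing through the remaining configuration'' does not explain how to steer towards the prescribed~$r$.

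The paper avoids this by never fixing the target in advance. It iterates from the left: take the smallest~$i$ with $b_ir_i$ non-free (so a first crossing is guaranteed), walk from $b_i$ along the current edge, turning left at each first crossing, until some red point $r_j$ is reached; flipping back along this convex walk carries $r_j$ to $b_i$ in at most $j-i\leqslant n/2-i$ steps. Because the walk is convex, $r_j$ is topmost within a convex subregion bounded by the walk, and Lemma~\ref{rightmost-crossing-lemma} cleans up that subregion in at most $i-x-1$ further flips. Each iteration therefore costs at most $n/2-1$ flips and strictly extends the plane prefix on the left, so at most $n/2$ iterations give $n^2/4-n/2$. The essential difference from your plan is that $r_j$ is \emph{discovered by the walk} rather than imposed, which is precisely what guarantees an available flip at every step.
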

\begin{proof}
	We present an iterative algorithm that uncrosses $M$ by $O(n^2)$ flips. Let $b_1,\dots,b_{n/2}$ be the blue points from left to right. By a suitable relabeling assume that $M=\{b_1r_1,\dots,\allowbreak b_{n/2}r_{n/2}\}$. To simplify the description of the proof, we add, to $M$, a dummy edge $b_0r_0$ such that $b_0$ is a blue point on $\ell$ that is to the left of all the blue points, $r_0$ is a red point that is higher than all the red points, and all points of $P$ are to the right of $\linepq{b_0}{r_0}$. 
	
	We describe one iteration of our algorithm. If $M$ is plane, then the algorithm terminates. Assume that $M$ is not plane. Let $i\in\{1,\dots,n/2\}$ be the smallest index such that $b_ir_i$ intersects some edges of $M$; see Figure~\ref{rightmost-crossing-fig2}-left. To simplify the rest of our description, we refer to the current iteration as iteration $i$. Notice that the blue endpoint of every non-free edge is strictly to the right of $b_{i-1}$. \changed{We perform the following walk along the edges of $M$ and stop as soon as we meet a red point. Starting from $b_i$, we walk along $b_ir_i$ until we see the first edge, say $b_{l_1}r_{l_1}$, that crosses $b_ir_i$. Then we turn left on $b_{l_1}r_{l_1}$ and keep walking until we see another edge, say  $b_{l_2}r_{l_2}$, that crosses $b_{l_1}r_{l_1}$. Then we turn left on $b_{l_2}r_{l_2}$ and continue this process until we meet a red point for the first time. Let $r_j$ denote this red point and let $b_j$ be the blue point that is matched to $r_j$. Let $\delta=b_ir_i,b_{l_1}r_{l_1}, b_{l_2}r_{l_2},\dots,b_{l_k}r_{l_k},b_jr_j$ be the sequence of edges visited along this walk, and $w$ denote the convex polygonal path that we have traversed; see Figure~\ref{rightmost-crossing-fig2}-left. 
	
	We are going to use $\delta$ to connect $r_j$ with $b_i$ by at most $j-i$ flips in such a way that none of the new edges intersects $w$ (except $r_jb_i$ which will connect the two endpoints of $w$). Observe that the blue points $b_i,b_{l_1}, b_{l_2},\dots,b_{l_k},b_j$ are ordered from left to right. We repeatedly flip the edge incident to $r_j$ until we connect $r_j$ with $b_i$. To that end, we first flip $r_jb_j$ and $r_{l_k}b_{l_k}$ (which are crossing) to $r_jb_{l_k}$ and $r_{l_k}b_j$. The edge $r_{l_k}b_j$ is to the right of $w$. If $b_{l_k}\neq b_i$, then the edge $r_jb_{l_k}$ intersects some other edge $r_{l_{k'}}b_{l_{k'}}$ such that $r_{l_{k'}}b_{l_{k'}} \in \delta$ and $k'<k$. In this case we flip $r_jb_{l_k}$ and $r_{l_{k'}}b_{l_{k'}}$ to $r_jb_{l_{k'}}$ $r_{l_{k'}}b_{l_{k}}$. The edge $r_{l_{k'}}b_{l_{k}}$ is to the right of $w$; in Figure~\ref{rightmost-crossing-fig2}-left we have $k=2$ and $k'=1$. Then we repeat this process with $r_jb_{l_{k'}}$ until $r_j$ gets connected to $b_i$. It turns out that after every flip, $r_j$ connects to a blue point that is closer to $b_i$, and the other edge is to the right of $w$. Therefore after at most $j-i$ flips we get the edge $r_jb_i$. Since $j\leqslant n/2$, the total number of flips in the above process is at most $n/2-i$.}
	
	Let $M'$ be the matching obtained after the above process. See Figure~\ref{rightmost-crossing-fig2}-right. Shoot a horizontal ray from $r_j$, to the left, and stop as soon as it hits an edge $b_xr_x$ in $M'$ \changed{with $x\in\{0,\dots,i-1\}$ (the dummy edge $b_0r_0$ ensures that such an index $x$ exists)}. Let $M''$ be the subset of the edges of $M'$ that are incident on $b_{x+1},\dots,b_i$, that is, $M''=\{b_{x+1}r_{x+1},\dots, b_{i-1}r_{i-1}, b_ir_j\}$. \changed{By our choices of $r_j$ and the flips in the above process,} the edges of $M''$ are in a convex region whose interior is disjoint from the edges of $M'\setminus M''$; this convex region is bounded by $\ell$, $b_xr_x$, $w$, and the ray from $r_j$, as depicted in Figure~\ref{rightmost-crossing-fig2}-right. The matching $M''$ has $i-x$ edges. Observe that, in $M''$, we have that $b_i$ is the rightmost blue point that is matched to the topmost red point $r_j$, and $M''\setminus\{b_ir_j\}$ is plane. Thus, by Lemma~\ref{rightmost-crossing-lemma} we can uncross $M''$ by at most $i-x-1$ flips. To this end, we have transformed $M$ to a matching in which the edges that are incident on $b_1,\dots,b_i$ are free. The total number of flips performed in iteration $i$ is at most \changed{$(n/2-i)+(i-x-1)=n/2-x-1\leqslant n/2-1$.}
	
	In the next iteration, the smallest index $i'$, for which $b_{i'}r_{i'}$ is not free, is larger than $i$. Thus, this smallest index moves at least one step to the right after each iteration. This means that the number of free edges, that are connected to the blue points of lower indices, increases. Therefore, after at most $n/2$ iterations our algorithm terminates. The total number of flips is 
	\[\changed{f(M)\leqslant \sum_{i=1}^{n/2}\left(\frac{n}{2}-1\right)=\frac{n^2}{4}-\frac{n}{2}.}\qedhere\]
\end{proof}

\section{Conclusions}
We investigated the number of flips that are necessary and sufficient to reach a non-crossing perfect matching on $n$ points in the plane. It is known that the minimum and the maximum lengths of a flip sequence are $O(n^2)$ and $O(n^3)$, respectively. We proved, with a new approach, that the minimum length of a flip sequence is $O(n\Delta)$ where $\Delta$ is the spread of the points set; this improves the previous bound \changed{if the point set has sublinear spread.} A natural open problem is to improve any of these bounds. Another open problem is to improve our $O(n\log n)$ upper bound on the number of sufficient flips to reach a plane spanning tree on points in convex position, or to show that this bound is tight. One potential way to do this, is that in Theorem~\ref{tree-thr}, we get a boundary edge $uv$ such that one of $u$ or $v$ has a constant degree.

It is worth mentioning that the number of flips, in a flip sequence, is highly dependent on the {\em order} in which we choose crossings to flip, and the {\em type} of a flip that we perform (among the two possible types). This dependency can be used to improve the bounds on the minimum number of flips. In Theorems~\ref{matching-convex-thr}, \ref{tree-thr}, \ref{bimatching-convex-thr}, and \ref{semi-collinear-thr} we used the order and proved some upper bounds, while in Theorem~\ref{dense-thr} we used the flip type. One may think of using the order and the flip type together to improve the current bounds.
Notice that for bichromatic matchings, spanning trees, and Hamiltonian cycles only one type of flip is possible, and thus, only the order can be used for further improvements. Also, notice that none of the order and the flip type can be used to improve the bounds on the maximum number of flips, because, in this case, an adversary chooses the order and the type.

\changed{
\section*{Acknowledgement}
	We would like to thank anonymous referees for their comments which improved the readability of the paper, and for pointing out an issue with the original proof of Theorem~\ref{semi-collinear-thr}.}

\bibliographystyle{abbrv}
\bibliography{Flip-Distance.bib}
\end{document}